\pgfplotsset{compat=1.18}
\newtheorem{theorem}{Theorem}
\newtheorem{lemma}[theorem]{Lemma}
\theoremstyle{definition}
\newtheorem{definition}[theorem]{Definition}
\theoremstyle{remark}
\newtheorem{example}[theorem]{Example}
\newtheorem{remark}[theorem]{Remark}
\definecolor{mycolor1}{rgb}{0.00000,0.44700,0.74100}
\definecolor{mycolor2}{rgb}{0.85000,0.32500,0.09800}
\definecolor{mycolor4}{rgb}{0.92900,0.69400,0.12500}
\DeclareFontFamily{U}{matha}{\hyphenchar\font45}
\DeclareFontShape{U}{matha}{m}{n}{
	<-6> matha5 <6-7> matha6 <7-8> matha7
	<8-9> matha8 <9-10> matha9
	<10-12> matha10 <12-> matha12
}{}
\DeclareSymbolFont{matha}{U}{matha}{m}{n}
\DeclareFontFamily{U}{mathx}{\hyphenchar\font45}
\DeclareFontShape{U}{mathx}{m}{n}{
	<-6> mathx5 <6-7> mathx6 <7-8> mathx7
	<8-9> mathx8 <9-10> mathx9
	<10-12> mathx10 <12-> mathx12
}{}
\DeclareSymbolFont{mathx}{U}{mathx}{m}{n}
\DeclareMathDelimiter{\vvvert} {0}{matha}{"7E}{mathx}{"17}
\newcommand{\N}{\mathbb{N}}
\newcommand{\Z}{\mathbb{Z}}
\newcommand{\R}{\mathbb{R}}
\newcommand{\C}{\mathbb{C}}
\newcommand{\E}{\mathbb{E}}
\let\rv\bm
\newcommand{\calO}{\mathcal{O}}
\newcommand{\Nseed}{M}
\newcommand{\tgrf}{\rv{Z}}
\newcommand{\wn}{\rv{W}}
\newcommand{\calA}{\mathcal{A}}
\newcommand{\calC}{\mathcal{C}}
\newcommand{\RNG}{\mathcal{R}}
\newcommand{\QFT}{\mathcal{F}}
\newcommand{\dW}{\, \mathrm{d}\rv{W}(s)}
\newcommand{\rh}{{r\!,h}}
\newcommand{\anc}{a}
\newcommand{\distN}{\mathcal{N}}
\newcommand{\distB}{\mathcal{B}}
\DeclareMathOperator{\polylog}{polylog}
\DeclareMathOperator{\poly}{poly}
\DeclareMathOperator{\Id}{Id}
\DeclareMathOperator{\Cov}{Cov}
\DeclareMathOperator{\Var}{Var}
\DeclareMathOperator{\Diag}{Diag}
\DeclareMathOperator{\sinc}{sinc}
\newcommand{\dx}{\, \mathrm{d}x}
\newcommand{\tol}{\varepsilon}
\renewcommand{\ket}[1]{|#1\rangle}
\renewcommand{\Pr}{\mathbb{P}}
\renewcommand{\Re}{\operatorname{Re}}
\renewcommand{\Im}{\operatorname{Im}}
\def\paragraph{\@startsection{paragraph}{4}%
  {\z@}{2ex}{-1em}%
  {\normalfont\bfseries}}
\renewcommand{\qwbundle}[2][]{%
  \pgfkeys{/quantikz/gates/.cd,style=,Strike Width=0.08cm,Strike Height=0.12cm,#1}%
  \pgfkeysgetvalue{/quantikz/gates/style}{\qz@style}%
  \pgfkeysgetvalue{/quantikz/gates/Strike Width}{\qz@sw}%
  \pgfkeysgetvalue{/quantikz/gates/Strike Height}{\qz@sh}%
  \expanded{%
    \noexpand\arrow[strike arrow={\qz@sw}{\qz@sh}{\unexpanded{#2}},\qz@style,phantom]{l}%
  }%
}
\begin{document}

\title[Quantum Sampling and Moment Estimation]{Quantum Sampling and Moment Estimation for Transformed Gaussian Random Fields}
\author[Matthias Deiml, Daniel Peterseim]{Matthias Deiml$^1$, Daniel Peterseim$^{1,2}$}
\address{$^1$Institute of Mathematics, University of Augsburg, Universit\"atsstr.~12a, 86159 Augsburg, Germany}
\address{$^2$Centre for Advanced Analytics and Predictive Sciences (CAAPS), University of Augsburg, Universit\"atsstr.~12a, 86159 Augsburg, Germany}
\email{\{matthias.deiml,daniel.peterseim\}@uni-a.de}
\thanks{The authors acknowledge support by the Deutsche Forschungsgemeinschaft (DFG, German Research Foundation) through the project~496984632.}
\date{\today}
\begin{abstract}
We present a quantum algorithm for efficiently sampling transformed Gaussian random fields on $d$-dimensional domains, based on an enhanced version of the classical moving average method. Pointwise transformations enforcing boundedness are essential for using Gaussian random fields in quantum computation and arise naturally, for example, in modeling coefficient fields representing microstructures in partial differential equations. Generating this microstructure from its few statistical parameters directly on the quantum device bypasses the input bottleneck. Our method enables an efficient quantum representation of the resulting random field and prepares a quantum state approximating it to accuracy $\tol > 0$ in time $\mathcal{O}(\polylog \tol^{-1})$. Combined with amplitude estimation and a quantum pseudorandom number generator, this leads to algorithms for estimating linear and nonlinear observables, including mixed and higher-order moments, with total complexity $\mathcal{O}(\tol^{-1} \polylog \tol^{-1})$. We illustrate the theoretical findings through numerical experiments on simulated quantum hardware.
\end{abstract}

\maketitle

{
\scriptsize
\textbf{MSC Codes.} 68Q12, 81P68, 60G15, 65C30, 35R60 \\
\indent
\textbf{Keywords.} gaussian random field, quantum computing, moving averages, stochastic partial differential equation
}

\section{Introduction}
Recent work has shown that quantum algorithms can outperform classical ones for certain high-dimensional tasks, including solving large linear and nonlinear algebraic systems~\cite{GSLW19,MPS+25,RR23,DP24a} and, in particular, suitably discretized partial differential equations (PDEs)~\cite{CPP+13,Ber14,BCOW17,CL20,CLO21,BNWA23a,HJZ24,LOC24,GHL25,DP25}. However, all such results run into a quantum information bottleneck: to remain computationally efficient, the amount of input data that must be loaded into the quantum device has to be very small relative to the overall problem size.
This limitation underscores the need to develop quantum-native methods that generate or preprocess input data directly on the quantum computer. Random coefficient fields in a PDE are a prime example: while individual realizations are too large to load explicitly, the law of the field is often described by only a few parameters within a parametric family. If sampling from that law can be performed on the quantum hardware, the PDE solvers built on top of it have a genuine chance of achieving quantum efficiency. 
\begin{figure}
    \centering
    \includegraphics[width=0.25\linewidth]{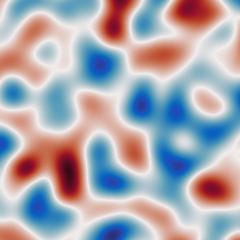}
    \hspace{0.1\linewidth}
    \includegraphics[width=0.25\linewidth]{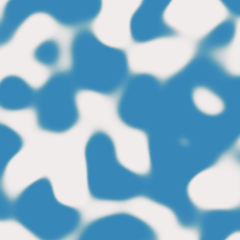}
    \hspace{0.1\linewidth}
    \begin{tikzpicture}
    \useasboundingbox (0,0) rectangle (0.0415\linewidth,0.25\linewidth);
    \node[inner sep=0pt, anchor=south west] at (0,0) {\includegraphics[height=0.25\linewidth]{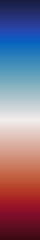}};
    \begin{axis}[
    at={(0,0)},
    anchor=south west,
	scale only axis,
    width=0.0415\linewidth,
    height=0.25\linewidth,
    xmin=0,
    xmax=1,
    ymin=-22,
    ymax=22,
    xtick=\empty,
    ytick={-22,0,22},
    yticklabel style={font=\small},
    axis lines=right,
    axis line style={draw=none},
    ]
    \end{axis}
    \end{tikzpicture}
    \caption{Realization of a Gaussian random field on a unit square domain with Gaussian covariance~\eqref{eq:gaussian-covariance} where $\xi = 0.07$ and $C = 4$ (left) and the same realization transformed by the rescaled sigmoid function $11\rho(x) = 11/(1 + e^{-x})$ (right).}
    \label{fig:sigmoid}
\end{figure}
At the same time, stochastic problems, such as PDEs with random coefficients, are particularly well-suited to quantum algorithms. Because quantum measurements are inherently probabilistic, many quantum routines naturally reduce to estimation tasks. In particular, quantum amplitude estimation provides a quadratic speedup over classical Monte Carlo methods~\cite{Mon15,BHMT02}.

To demonstrate this idea, we study the quantum generation of random fields derived from Gaussian random fields via suitable transformations, together with the estimation of associated scalar quantities of interest, which may be nonlinear functions of the field. A Gaussian random field is a stochastic function on a spatial domain whose pointwise values are normally distributed and whose spatial correlations are prescribed by a covariance function (see Figure~\ref{fig:sigmoid} for a prototypical sample). In many applications, the Gaussian random field is transformed by a nonlinear function, for example, to restrict values to a specific finite range, before being used. We refer to the resulting processes as transformed Gaussian random fields. These are widely used in statistical modeling across many applications, including porous-media flow in geophysics~\cite{ESZ24}, uncertainty quantification for computational mechanics and design optimization~\cite{DKL+24}, and simulations of elastic materials with random microstructure~\cite{BBE+16}. In these settings, the random field typically serves as a coefficient in a PDE, which is the prototypical use case motivating our work.

To the best of our knowledge, this is the first attempt to address the generation and estimation of (transformed) Gaussian random fields in the context of quantum computing. Although extensive methods exist in the classical setting, as reviewed, e.g., in~\cite{LLSY19}, none have been adapted to take advantage of quantum algorithms. Instead of relying on the most widely used classical schemes, we start from a specific technique: the method of moving averages \cite{Oli95}. We show that a slight modification of its standard discretization yields an approximation of the Gaussian random field for which \emph{pointwise} evaluation of realizations is extremely efficient, especially when the covariance function is highly regular. Although isolated pointwise evaluation is typically irrelevant in classical applications, where computations are shared across multiple points, it becomes crucial in the quantum setting as a means to fully exploit quantum parallelism.

Concretely, our method is based on the following construction. Given any classically computable function~$\calA \colon \{0, \dots, 2^{n}-1\} \to [-1, 1]$, we can define a quantum algorithm~$U_\calA$ that performs a diagonal scaling on the computational basis. Specifically, for each basis state~$\ket{j}$, the action is given by
\begin{equation} \label{eq:diagonal-scaling}
U_\calA\ket{j}\ket{0} = \calA(j)\ket{j}\ket{0} + \sqrt{1 - \calA(j)^2}\ket{j}\ket{1},
\end{equation}
as formalized in Lemma~\ref{lem:amplitude-encoding}. Quantum advantage then arises from applying $U_\calA$ to a superposition of basis states, enabling the simultaneous evaluation for all possible inputs. In our case, these inputs encode a random seed specifying the realization of the random field, as well as the spatial point of evaluation.
By integrating this construction with quantum amplitude estimation and a quantum implementation of a classical pseudorandom number generator, an approach previously explored in financial modeling~\cite{MS20, KMTY21}, we obtain a procedure for estimating quantities of interest with complexity that scales nearly linearly with the inverse of the error tolerance. The same framework extends naturally to mixed and higher-order moments of the field. We illustrate and validate our theoretical results through numerical experiments on a simulated quantum device.

\paragraph{Notation}
Random variables and random fields are denoted in boldface (e.g., $\rv{Y}$); 
for a random field $\rv{Y} \colon \Omega \times D \to \mathbb{R}$ 
we write $\rv{Y}(x)$ for the scalar random variable 
$\omega \mapsto \mathbf{Y}(\omega, x)$, suppressing the explicit dependence on $\omega$. Classical objects/functions are denoted in calligraphic letters (e.g., $\mathcal A$), and quantum algorithms (unitaries) by $U$ with appropriate subscripts.  
We also adopt the following notational conventions:\\[0.5ex]
\begin{tabular}{ll}
$[N]$ & The set of integers $0, \dots, N-1$ \\
$\ket{j}_n$ & An $n$-qubit quantum basis state encoding the integer $j \in [2^n]$ \\
$\distN(\mu,\sigma^2)$ & Normal (Gaussian) distribution with mean $\mu$ and standard deviation $\sigma$ \\
$\distN(0,\Sigma)$ & Multivariate Gaussian distribution with zero mean and covariance $\Sigma$ \\
$\distB(n)$ & Uniform distribution over the binary (or bit) strings $\{0, 1\}^n$ of length~$n\in\N$\\
$\Id_n$ & Identity operator on $\C^{2^n}$\\
$\calO(f)$ & Asymptotic notation; hidden constant is independent of any variables in $f$ \\
$C(a, b, \dots)$ & Positive constant depending only on $a, b, \dots$ \\
$|x|, |x|_1$ & Euclidean and $1$-norm of a finite dimensional vector~$x \in \R^d$ respectively \\
$\mathcal{L}(\R^d)$ & Set of all Lebesgue-measurable subsets of $\R^d$.
\end{tabular}

\paragraph{Our contributions}
We address the estimation of stochastic quantities of interest associated with a transformed Gaussian random field,
\[
\tgrf \coloneqq \rho \circ \rv{Y} \colon \Omega \times D \to [-1,1],
\]
where $\Omega$ is a probability space, $D$ is a spatial domain, $\rv{Y}$ is a (mean-zero) Gaussian random field fully characterized by its covariance function $c \colon D\times D\to\mathbb R$, and $\rho \colon \mathbb R\to[-1,1]$ is a bounded Lipschitz function (e.g., the sigmoid used in Figure~\ref{fig:sigmoid}). The uniform bound $|\tgrf|\le 1$ is needed to apply \eqref{eq:diagonal-scaling} and encode samples in the amplitudes of a quantum state. For estimating mixed moments of the random field, we achieve an exponential speedup with respect to the number of evaluation points and a quadratic speedup with respect to accuracy over typical classical implementations. Our approach is structured in three stages, each captured by one of the following main theorems.

\begin{theorem}[Classical pointwise evaluation of Gaussian random fields, Section~\ref{sec:generation}]
\label{thm:main1}
Let $d\in\mathbb N$, let $D\subset\mathbb R^d$ be a convex domain, and let $\rv Y\colon\Omega\times D\to\mathbb R$ be a Gaussian random field with a sufficiently smooth covariance kernel (precise assumptions are stated in Lemma~\ref{lem:quadrature}). Then there exists a classical algorithm $\calA(x, \rv b)$ which is randomized in the sense that it has access to an arbitrary number of random bits~$\rv b \sim \distB(\ast)$ and, given a point $x\in D$, outputs an approximation to $\rv Y(x)$. To reduce the expected approximation error in the supremum norm below an arbitrary tolerance $\tol>0$, the algorithm has computational complexity~$\calO(\polylog \tol^{-1})$. (Naturally, the number of needed random bits~$\rv{b}$ is bounded by this complexity.)
\end{theorem}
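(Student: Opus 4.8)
The plan is to realize $\calA$ as a spatially truncated, high-order quadrature discretization of the \emph{moving average} (convolution) representation of $\rv Y$; the regularity of the covariance is exactly what keeps every discretization parameter logarithmic in $\tol^{-1}$, and that is where the gain over the classical moving-average scheme lies. There are three ingredients: (i) the moving-average representation of $\rv Y$ via the square root of $c$, discretized on a fixed coarse mesh with a per-cell polynomial quadrature of the white-noise integral; (ii) the observation that the resulting field is a linear combination of i.i.d.\ Gaussian vectors, one per cell, which is what makes single-point evaluation cheap; and (iii) the generation of those vectors from a bounded number of random bits. The one delicate point, discussed last, is the uniform-in-$x$ error bound.

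\emph{Representation and discretization.} Write the covariance in factored form $c(x,y)=\int_{\R^d} g(x,s)\,g(y,s)\,\mathrm ds$ --- for a stationary kernel this is the convolution square root, $g(x,\cdot)=\calG(x-\cdot)$ with $\widehat{\calG}=\sqrt{\widehat c}$, and in general a smooth factorization is furnished by the hypotheses of Lemma~\ref{lem:quadrature} --- so that $\rv Y(x)=\int_{\R^d} g(x,s)\dW$ in law, with $\wn$ a white noise on $\R^d$; those hypotheses guarantee that $s\mapsto g(x,s)$ and $\partial_x g(x,s)$ extend analytically to a fixed complex strip around $\R^d$ and decay super-algebraically, uniformly for $x\in D$. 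Now fix a mesh width $h=\calO(1)$, the cubes $Q_k=hk+[0,h)^d$ ($k\in\Z^d$), and the finite index set $\mathcal K\subset\Z^d$ of all $k$ whose cube lies within distance $R$ of $D$; by the decay of $g$, $R=\calO(\polylog\tol^{-1})$ makes the omitted tail negligible, and then $|\mathcal K|=\calO(\polylog\tol^{-1})$. On each $Q_k$, $k\in\mathcal K$, replace $g(x,\cdot)|_{Q_k}$ by its tensor-product polynomial interpolant of coordinate degree $p$; analyticity gives interpolation error $\calO(\mu^{-p})$ with a fixed $\mu>1$, so $p=\calO(\log\tol^{-1})$ suffices. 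Feeding this piecewise polynomial through the stochastic integral yields the Gaussian field
\[
\rv Y^{\mathrm{app}}(x)=\sum_{k\in\mathcal K}\bigl\langle c_k(x),\,\rv{\Xi}_k\bigr\rangle,\qquad
\rv{\Xi}_k=\Bigl(\textstyle\int_{Q_k} q_{k,j}(s)\dW\Bigr)_{j},
\]
where the $q_{k,j}$ are the basis polynomials on $Q_k$ and $c_k(x)$ is the deterministic coefficient vector of the interpolant of $g(x,\cdot)$. The structural fact that makes \emph{pointwise} evaluation cheap is that the $\rv{\Xi}_k$ are \textbf{i.i.d.}\ over $k$: white noise over disjoint cubes is independent, and by translation invariance each $\rv{\Xi}_k\sim\distN(0,\Sigma)$ for a single $(p+1)^d\times(p+1)^d$ covariance matrix $\Sigma=\Sigma(h,p)$, computed once.

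\emph{Sampling and complexity.} Precompute a Cholesky factor of $\Sigma$. Partition the random bits into one block $\rv b_k$ per cube; from $\rv b_k$ obtain $\rv{\zeta}_k$ by fixed-precision inverse-CDF sampling of $(p+1)^d$ independent standard normals followed by the Cholesky map, coupled to a genuine draw of $\rv{\Xi}_k$ with $\E\,|\rv{\zeta}_k-\rv{\Xi}_k|\le\delta$ at cost $\calO(\log\delta^{-1})$ bits and arithmetic operations per coordinate. Set $\calA(x,\rv b):=\sum_{k\in\mathcal K}\langle c_k(x),\rv{\zeta}_k\rangle$; always reading the same block $\rv b_k$ for $Q_k$ makes $\calA(\cdot,\rv b)$ a genuine function of $x$ approximating one realization of $\rv Y$. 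A single evaluation touches $|\mathcal K|=\calO(\polylog\tol^{-1})$ cubes and, in each, $(p+1)^d=\calO(\polylog\tol^{-1})$ coefficients --- each built from $\calO(\polylog\tol^{-1})$ evaluations of $g$, assumed efficiently computable from $c$ (in the stationary case via a rapidly convergent quadrature for $\mathcal F^{-1}\sqrt{\widehat c}$) --- with $\calO(\log\delta^{-1})$ work per coefficient, giving total complexity $\calO(\polylog\tol^{-1})$ once $\delta^{-1}=\poly(\tol^{-1})$.

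\emph{Error estimate --- the main obstacle.} It remains to bound $\E\,\bigl\|\rv Y-\calA(\cdot,\rv b)\bigr\|_{L^\infty(D)}$. The sampling part is controlled deterministically: $\bigl\|\rv Y^{\mathrm{app}}-\calA(\cdot,\rv b)\bigr\|_{L^\infty(D)}\le\sum_{k\in\mathcal K}\bigl(\sup_{x\in D}|c_k(x)|\bigr)|\rv{\zeta}_k-\rv{\Xi}_k|$, whose expectation is at most $\bigl(\sum_{k\in\mathcal K}\sup_{x\in D}|c_k(x)|\bigr)\delta$, so one takes $\delta$ below $\tol$ by a fixed $\polylog(\tol^{-1})$ factor. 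The genuinely probabilistic term is $\E\,\|\rv E\|_{L^\infty(D)}$, where $\rv E:=\rv Y-\rv Y^{\mathrm{app}}$ is a centered Gaussian field whose covariance equals the truncation-plus-interpolation error of $\int g(x,s)\,g(y,s)\,\mathrm ds$; the estimates above bound its pointwise variance by $\sigma^2$ with $\sigma=\calO(\mu^{-p})+(\text{tail in }R)$, and the analogous bounds for $\partial_x g$ bound its canonical-metric Lipschitz constant by some $L=\calO(1)$. A standard chaining (Dudley) estimate then gives $\E\,\|\rv E\|_{L^\infty(D)}\le C\,\sigma\sqrt{d\,\log\!\bigl(L\,\mathrm{diam}(D)/\sigma\bigr)}$, with the boundedness of $D$ controlling the covering numbers (the convexity of $D$ enters the quadrature estimates underlying Lemma~\ref{lem:quadrature}); enlarging $p$ and $R$ by a further $\polylog(\tol^{-1})$ factor brings this below $\tol/2$, and summing the two contributions finishes the proof. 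The main obstacle is exactly this last step: upgrading the easy pointwise variance bound to a bound on the \emph{expected supremum} over $D$ needs the chaining argument together with uniform control, in $x$, of $\rv E$ in its canonical metric --- and it is this that forces $g$, hence $c$, to be regular enough for the per-cube interpolation error to stay summable over the $\calO(\polylog\tol^{-1})$ active cubes without spoiling the logarithmic scaling.
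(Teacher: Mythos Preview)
Your argument is sound and rests on the same three pillars as the paper: the moving-average representation of $\rv Y$, an exponentially convergent discretization of the white-noise integral afforded by analyticity of the kernel, and a Dudley/entropy-integral bound to pass from pointwise variance of the error field to its expected supremum. The genuine difference is in the discretization. The paper (Lemma~\ref{lem:quadrature}) uses a global $\sinc$ interpolation of $f(x,\cdot)$ on a single shrinking grid $h_r=r^{-1/2}$: orthogonality of the shifted $\sinc$ functions collapses the approximation to $\rv Y_{r,h}(x)=h^{d/2}\sum_{j\in\Z_r^d} f(x,jh)\,\wn_j$ with i.i.d.\ \emph{scalar} normals, and the covariance error is literally a trapezoidal-rule error, so the exponential rate comes from the classical estimate of Lemma~\ref{lem:trapezoid}. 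You instead freeze the mesh and raise the per-cube polynomial degree, obtaining i.i.d.\ Gaussian \emph{vectors} (one per cube) with a common covariance $\Sigma$ requiring a one-time Cholesky. Both routes land at $\calO\bigl((\log\tol^{-1})^{2d}\bigr)$ work per point. The paper's version is leaner --- no $\Sigma$, and the analysis reduces to a single known theorem --- while yours decouples the truncation radius $R$ from the approximation order $p$, which is conceptually cleaner and may be convenient when varying the correlation length. Two minor points: your phrase ``within distance $R$ of $D$'' tacitly assumes the stationary case (in general one truncates in $s$ directly, using the uniform-in-$x$ decay), and the convexity of $D$ in the paper is used not for quadrature but for the mean-value step that controls the canonical-metric increments of the error field.
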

\noindent For example, the Gaussian (squared-exponential) covariance
\begin{equation} \label{eq:gaussian-covariance}
c(x,y)=C\,\exp\bigl(-|x-y|^2/(2\xi^2)\bigr),
\end{equation}
with characteristic length scale $\xi>0$ and variance~$C = c(0) > 0$ at a single point, satisfies the assumptions of Theorem~\ref{thm:main1}. 

This classical evaluation routine is a key ingredient for our quantum sampling methods for transformed Gaussian random fields.
\begin{theorem}[Quantum sampling of transformed Gaussian random fields, Section~\ref{sec:sampling}] \label{thm:main2}
    Let $\tgrf \colon \Omega \times D \to [-1, 1]$ be a uniformly bounded random field. Let $n \in \N$ and $x_0, \dots, x_{n-1} \in D$ be points at which $\tgrf$ should be evaluated. Let $\calA$ be a randomized classical algorithm that, given $j\in[n]$ and random bits $\rv b \sim \distB(\ast)$, returns  
    \(\calA(j, \rv b) = \tgrf(x_j),\)
    i.e., a sample of $\tgrf$ at $x_j$.
    Then there exists a randomized construction that, given $m \in \N$, yields a quantum algorithm $\rv U^{(m)}$ that computes $2^m$ samples of $\tgrf$ in the sense that
    \begin{equation} \label{eq:sampler}
    \rv U^{(m)} \ket{j}\ket{k}\ket{0}_a = \tgrf^{(k)}(x_j)\ket{j}\ket{k}\ket{0}_a + \sqrt{1 - \tgrf^{(k)}(x_j)^2}\ket{j}\ket{k}\ket{*}_a,
    \end{equation}
    where the number of ancilla bits~$a \in \N$ depends on $\calA$, and the state $\ket{*}_a$ is orthogonal to $\ket{0}_a$. Here, $\tgrf^{(0)}, \dots, \tgrf^{(2^m-1)}$ are independent and identically distributed (i.i.d.)~copies of $\tgrf$. The algorithm $\rv U^{(m)}$ uses $\calA$ twice, and has $\calO(\poly m)$ other operations.
\end{theorem}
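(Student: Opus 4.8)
The plan is to collapse the entire statement into a single application of Lemma~\ref{lem:amplitude-encoding}, after first absorbing the per-copy randomness into the classical side by means of a pseudorandom number generator. Concretely, I would fix a classical pseudorandom number generator $\RNG$ of the kind used in the financial-modeling context of~\cite{MS20,KMTY21}, which on a master seed $\rv s$ and a counter $k\in[2^m]$ deterministically produces a bit stream of the length required by $\calA$. The randomized part of the construction is then simply to draw $\rv s \sim \distB(\ast)$ of length $\calO(\poly m)$ and hard-wire it into the circuit (so that the resulting $\rv U^{(m)}$ genuinely depends on this draw). I would then define the composite classical algorithm
\[
\hat\calA(j, k) \coloneqq \calA\bigl(j,\ \RNG(\rv s, k)\bigr), \qquad j \in [n],\ k \in [2^m],
\]
and set $\tgrf^{(k)}(x_j) \coloneqq \hat\calA(j,k)$. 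Since $\calA$ takes values in $[-1,1]$, so does $\hat\calA$, and $\hat\calA$ is classically computable from $\calA$ at the price of one call to $\calA$ plus $\calO(\poly m)$ operations to evaluate $\RNG$ on the $\calO(\poly m)$-bit input $(\rv s, k)$.

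The key step — and the only one that is not bookkeeping — is to argue that the copies $\tgrf^{(0)},\dots,\tgrf^{(2^m-1)}$ are i.i.d.\ with the law of $\tgrf$. By the hypothesis on $\calA$, the coupling $\rv b \mapsto (\calA(j,\rv b))_j$ pushes the uniform law on bit strings forward to the joint law of $(\tgrf(x_j))_j$. Hence, if for a uniformly random master seed $\rv s$ the outputs $\RNG(\rv s, 0),\dots,\RNG(\rv s,2^m-1)$ are jointly independent and each uniform, then the family $\bigl(\calA(\cdot,\RNG(\rv s,k))\bigr)_k$, i.e.\ $(\tgrf^{(k)})_k$ restricted to $\{x_0,\dots,x_{n-1}\}$, consists of independent samples each distributed as $\tgrf$. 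This joint independence of the per-counter outputs is precisely the guarantee expected of $\RNG$ (modeled, as is standard here, as a random function of the counter $k$); I would isolate it as the assumption placed on $\RNG$ and refer to~\cite{MS20,KMTY21} for the concrete construction. I expect this modeling step — transferring independence of the seed-indexed generator outputs to genuine independence of the field copies — to be the main point requiring care, everything else being routine.

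Finally, I would quantize by applying Lemma~\ref{lem:amplitude-encoding} to $\hat\calA \colon [n]\times[2^m] \to [-1,1]$, regarding $(j,k)$ as the computational-basis input carried by the first two registers $\ket{j}\ket{k}$. This yields a unitary $\rv U^{(m)}$ acting on $\ket{j}\ket{k}\ket{0}_a$ — where the $a$ ancilla qubits comprise the single amplitude-flag qubit of \eqref{eq:diagonal-scaling} together with the scratch register needed to evaluate $\hat\calA$ reversibly — with
\[
\rv U^{(m)}\ket{j}\ket{k}\ket{0}_a = \hat\calA(j,k)\,\ket{j}\ket{k}\ket{0}_a + \sqrt{1-\hat\calA(j,k)^2}\,\ket{j}\ket{k}\ket{*}_a, \qquad \ket{*}_a \perp \ket{0}_a,
\]
which, since $\tgrf^{(k)}(x_j)=\hat\calA(j,k)$, is exactly \eqref{eq:sampler}. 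In the usual compute--rotate--uncompute implementation the reversible circuit for $\hat\calA$ is run twice (once to write its value into the scratch register, once to clear it), so $\calA$ is invoked exactly twice, while the remaining gates — the two evaluations of $\RNG$ and the controlled rotation on the flag qubit — number $\calO(\poly m)$, as claimed. Taking $a$ as dictated by $\calA$ and this construction closes the argument.
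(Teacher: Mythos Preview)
Your proposal is correct and follows the paper's approach essentially line for line: the paper likewise samples a classical seed $\rv s$, defines the composite $\tilde{\calA}^{(m)}(j,k,\rv s)=\calA(j,\RNG(\rv s,k))$, and obtains $\rv U^{(m)}$ by applying Lemma~\ref{lem:amplitude-encoding} with $\calC=\tilde{\calA}^{(m)}(\bullet,\bullet,\rv s)$, the compute--rotate--uncompute structure of that lemma accounting for the two invocations of $\calA$. The one refinement in the paper is that the i.i.d.\ claim is justified not by assuming literal independence of the $\RNG$ outputs but via computational indistinguishability (Definition~\ref{def:secure-prng}): a measurable deviation of $\rv U^{(m)}$'s output from the ideal law would yield an efficient distinguisher against the PRNG, which is precisely the modeling caveat you already isolate.
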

\noindent If the points $x_j$ possess sufficient structure, e.g., they are nodes of a (tensor-product) grid, then the complexity of $\calA$ depends only logarithmically on $n$. 

From this, standard techniques allow can be applied to extract linear quantities of interest. Mixed moments of several such scalar random variables can then be estimated using quantum amplitude estimation~\cite{BHMT02} together with the nonlinear quantum operations of~\cite{DP24a}.
\begin{theorem}[Quantum estimation of mixed stochastic moments, Section~\ref{sec:estimation}] \label{thm:main3}
Let $\tgrf \colon \Omega \times D \to [-1, 1]$ be a uniformly bounded random field and $s \in \N$. For $\ell=1,\dots,s$, define linear quantities of interest
\[
\rv \lambda_\ell \coloneqq q_{0}^{(\ell)} \tgrf(x_0) + \dots + q_{n - 1}^{(\ell)} \tgrf(x_{n-1})
\]
where each $q^{(\ell)} \in \R^n$ satisfies $|q^{(\ell)}|_1 = 1$ and can be efficiently prepared (see~\eqref{eq:prepare}). Assume access to a quantum algorithm $\rv U^{(m)}$ that evaluates $\tgrf$ as in \eqref{eq:sampler}. Then, for any tolerance $\tol>0$ and failure probability $\delta>0$, one can estimate $\E[\rv\lambda_1\rv\lambda_2\cdots\rv\lambda_s]$ 
to absolute error~$\tol > 0$ with failure probability at most~$\delta$, using $\calO(s\tol^{-1}\log \delta^{-1})$ applications of $\rv U^{(m)}$, where $m \in \calO(\log \tol^{-1} \log\log\delta^{-1})$.
\end{theorem}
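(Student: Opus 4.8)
The plan is to reduce the mixed moment $\E[\rv\lambda_1\cdots\rv\lambda_s]$ to a single transition amplitude of an explicit circuit built from $\rv U^{(m)}$, to read that amplitude off with quantum amplitude estimation, and to balance separately the quantum estimation error against the Monte Carlo error incurred by using only $2^m$ realizations of $\tgrf$.

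First I would assemble a (randomized) unitary $\rv V$ on a seed register together with, for each $\ell=1,\dots,s$, an index register and an ancilla register of the type occurring in~\eqref{eq:sampler}. Its action is as follows: bring the seed register into the uniform superposition $2^{-m/2}\sum_{k\in[2^m]}\ket{k}$ and prepare each index register in the state $\ket{\phi^{(\ell)}}\coloneqq\sum_j\sqrt{|q^{(\ell)}_j|}\,\ket{j}$, which is possible efficiently by hypothesis, cf.~\eqref{eq:prepare}. Then apply $\rv U^{(m)}$ once for every $\ell$ to the triple formed by index register~$\ell$, the \emph{shared} seed register, and ancilla register~$\ell$; because $\rv U^{(m)}$ leaves $\ket{j}\ket{k}$ untouched, all $s$ invocations evaluate \emph{the same} realization $\tgrf^{(k)}$ — exactly what a moment of a single random field requires — at a cost of $s$ applications of $\rv U^{(m)}$. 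Finally, on each index register apply the diagonal sign gate $\sum_j\operatorname{sign}(q^{(\ell)}_j)\,\ket{j}\langle j|$ (a classically computable phase, hence cheap) followed by the inverse of the state preparation. A short computation from~\eqref{eq:sampler} shows that, for a fixed seed $k$, the amplitude of the all-zero state of index and ancilla register~$\ell$ equals $\sum_j q^{(\ell)}_j\tgrf^{(k)}(x_j)=\rv\lambda_\ell^{(k)}$; since the registers belonging to different $\ell$ are disjoint tensor factors, the amplitude of returning \emph{all} of them to $\ket{0}$ is the product $\prod_{\ell=1}^s\rv\lambda_\ell^{(k)}$ — this multilinear-in-the-amplitudes step is precisely the nonlinear mechanism of~\cite{DP24a}. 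Writing $\widehat M\coloneqq 2^{-m}\sum_{k\in[2^m]}\prod_{\ell=1}^s\rv\lambda_\ell^{(k)}$ for the resulting empirical moment, averaging over the seed superposition yields
\[
\langle 0|\, G^\dagger\, \rv V\, \ket{0}\;=\;\widehat M,
\]
where $\ket{0}$ is the all-zero state of all registers and $G$ is the cheap circuit that prepares $2^{-m/2}\sum_k\ket{k}$ on the seed register and does nothing else. Here the hypothesis $|q^{(\ell)}|_1=1$ enters: it forces $|\rv\lambda_\ell^{(k)}|\le|q^{(\ell)}|_1=1$ because $\tgrf$ takes values in $[-1,1]$, so every intermediate state is genuinely normalized and $\rv V$ is well defined.

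Next, since all the $\rv\lambda_\ell^{(k)}$ are real, $\widehat M=\Re\langle 0|G^\dagger\rv V\ket{0}$, and a Hadamard test turns this into the bias of a one-qubit measurement, i.e.\ an outcome of probability $p=(1+\widehat M)/2$. Quantum amplitude estimation~\cite{BHMT02} applied to this Bernoulli parameter estimates $p$, and hence $\widehat M=2p-1$, to additive error $\tol/2$ with constant success probability using $\calO(\tol^{-1})$ applications of the (controlled) circuit $\rv V$ and its inverse, each of which invokes $\rv U^{(m)}$ exactly $s$ times; a median over $\calO(\log\delta^{-1})$ independent repetitions raises the success probability to $1-\delta/2$. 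In total this costs $\calO(s\,\tol^{-1}\log\delta^{-1})$ applications of $\rv U^{(m)}$ together with $\calO(\poly m)$ further gates per application. It then remains to bound $|\widehat M-\E[\rv\lambda_1\cdots\rv\lambda_s]|$: as $\tgrf^{(0)},\dots,\tgrf^{(2^m-1)}$ are i.i.d., so are the variables $(\rv\lambda_1\cdots\rv\lambda_s)^{(k)}$, which lie in $[-1,1]$ with mean $\E[\rv\lambda_1\cdots\rv\lambda_s]$, so Hoeffding's inequality gives $|\widehat M-\E[\rv\lambda_1\cdots\rv\lambda_s]|\le\tol/2$ except with probability $\le\delta/2$ as soon as $2^m\ge\const\cdot\tol^{-2}\log\delta^{-1}$, which holds for $m$ of the order claimed. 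A union bound over this event and the amplitude-estimation failure delivers accuracy $\tol$ at failure probability $\le\delta$.

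The step I expect to be the main obstacle is arranging that the amplitude $\widehat M$ reproduces the \emph{signed} product $\prod_\ell\rv\lambda_\ell^{(k)}$ rather than the sum of squares $\prod_\ell(\rv\lambda_\ell^{(k)})^2$ that a naive overlap measurement would return: this is what forces the reference state $G\ket{0}$ and the Hadamard test, along with the sign bookkeeping on the index registers and the normalization $|q^{(\ell)}|_1=1$ that keeps all intermediate states normalized. A secondary point is that $\rv U^{(m)}$ must be reused against one common seed register, so that the $s$ factors of each term genuinely refer to the same field realization — a property guaranteed by the pseudorandom construction underlying Theorem~\ref{thm:main2}.
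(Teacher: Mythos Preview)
Your proposal is correct and follows essentially the same route as the paper: build one circuit that, for each $\ell$, prepares $\ket{\phi^{(\ell)}}$, applies $\rv U^{(m)}$ against a \emph{shared} seed register, inserts the sign gate, and unprepares; the tensor-product structure then makes the $\ket{0}$-amplitude equal to $\prod_\ell\rv\lambda_\ell^{(k)}$, averaging over the seed superposition produces the empirical moment $\widehat M$, Hoeffding controls $|\widehat M-\E[\rv\lambda_1\cdots\rv\lambda_s]|$ with the stated choice of $m$, and amplitude estimation reads off $\widehat M$ with $\calO(s\tol^{-1}\log\delta^{-1})$ calls to $\rv U^{(m)}$. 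The one place you are slightly more explicit than the paper is in inserting a Hadamard test so that amplitude estimation returns the signed real number $\widehat M$ rather than $|\widehat M|$; the paper simply invokes Lemma~\ref{lem:amplitude-estimation} (which as stated yields $|\lambda|$) without commenting on sign recovery, so your extra step is a harmless refinement rather than a different argument.
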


Combining Theorems~\ref{thm:main1}--\ref{thm:main3}, the total complexity for computing moments of a linear quantity of interest of the transformed random field is
$\calO(\tol^{-1}\log n\polylog \tol^{-1})$. This yields a quadratic improvement in the dependence on $\tol^{-1}$ over standard Monte Carlo, which could as well be achieved classically via quasi-Monte Carlo. However, more importantly, there is an exponential improvement with respect to the number of points~$n$, which is beyond the reach of known classical algorithms. 

\section{Generation of Gaussian random fields}
\label{sec:generation}
In this section, we adapt the classical method of moving averages to optimize its performance for use on a quantum computer. The criteria for efficiently generating Gaussian random fields differ markedly between classical and quantum settings. Classical algorithms typically produce an entire discretized realization at once, whereas a na\"ive use of quantum parallelism demands that the computation at each spatial point be fully independent. We later argue that more sophisticated quantum strategies are unlikely to circumvent this requirement.

The method of moving averages represents the Gaussian random field as a convolution over white noise, where the convolution kernel depends on the covariance function. By approximating the convolution using the trapezoidal quadrature on a Cartesian grid, the white noise is effectively discretized. If the kernel is sufficiently regular, the required grid size scales only logarithmically in the target accuracy, making this an efficient algorithm for point evaluation of the Gaussian random field.

\subsection{Preliminaries on Gaussian random fields}
Gaussian random fields, also called Gaussian processes or Gaussian random functions, model spatially correlated noise.
They are by now well understood; see, e.g., \cite{Lif95,TA07} for thorough treatments. A formal definition is given below.

\begin{definition}[Gaussian random field, covariance function, stationarity]
Let $D$ be any set and $\Omega$ a probability space. A centered \emph{Gaussian random field} is a function $\rv{Y} \colon \Omega \times D \to \R$ such that, for all finite subsets of points~$\{x_1, \dots, x_n\} \subset D$, the distributions of~$(\rv{Y}(x_1), \dots, \rv{Y}(x_n))$ are multivariate Gaussian with zero mean
\[
(\rv{Y}(x_1), \dots, \rv{Y}(x_n)) \sim \distN\big(0, (c(x_i, x_j))_{i, j = 1}^n\big).
\]
The random field is uniquely characterized by its \emph{covariance function}~$c \colon D \times D \to \R_0^+$, defined as 
\[
c(x, y) \coloneqq \Cov(\rv{Y}(x), \rv{Y}(y)).
\]
If $D \subset \R^d$ for some dimension $d \in \N$ and $c$ only depends on the difference~$x - y$, then the field is called \emph{stationary}, and we write
\(
c(x - y) = c(x, y).
\)
\end{definition}

Note that, while in this paper we restrict ourselves to centered Gaussian random fields $\rv{Y}$, this is without loss of generality, as any non-centered Gaussian random field can be centered by subtracting its mean: $\rv{Y} - \E[\rv{Y}]$. Gaussian random fields are particularly useful for statistical modeling, as their distribution is fully determined by the covariance function; see, e.g., \cite[Section~1.2]{TA07}.

In certain cases, it makes sense to generalize the definition of a Gaussian random field to allow for infinite variance at each point evaluation. In other words, we may consider the formal covariance function
\[
c(x, y) = \delta(x - y),
\]
where $\delta$ denotes the Dirac delta function. This corresponds to a limiting case of uncorrelated noise with infinite variance at individual points. A more rigorous formulation is given by the following result.
\begin{lemma}[White noise {\cite[Section~5.6]{Lif95}}]\label{lem:white}
    For any dimension $d \in \N$ there is a random function $\wn \colon \Omega \times \mathcal{L}(\R^d) \to \R$, so-called \emph{white noise}, which satisfies
    \[
    \Cov(\wn(X), \wn(Y)) = \lambda(X \cap Y) \qquad\text{for all } X, Y \in \mathcal{L}(\R^d),
    \]
    where $\lambda$ denotes the Lebesgue measure. It induces a linear integral operator 
    \[
    L^2(\R^d) \ni f\mapsto \int_{\R^d} f(s) \dW \sim \distN(0, \|f\|_{L^2}^2).
    \]
\end{lemma}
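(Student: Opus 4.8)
The plan is to realize white noise as an \emph{isonormal Gaussian process} over the Hilbert space $H \coloneqq L^2(\R^d)$: a linear map $W\colon H\to L^2(\Omega)$ whose range consists of centered, jointly Gaussian random variables and which is an isometry, $\E[W(f)W(g)]=\langle f,g\rangle_{L^2}$. Granting such a $W$, the integral operator in the statement is nothing but $f\mapsto \int_{\R^d}f(s)\dW \coloneqq W(f)$, while white noise on sets of finite Lebesgue measure is recovered as $\wn(X)\coloneqq W(\rv{1}_X)$, for which the isometry immediately gives $\Cov(\wn(X),\wn(Y))=\langle \rv{1}_X,\rv{1}_Y\rangle_{L^2}=\lambda(X\cap Y)$.

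To construct $W$ I would use that $L^2(\R^d)$ is separable, fix an orthonormal basis $(e_k)_{k\in\N}$, and on a sufficiently rich probability space $\Omega$ (for instance $[0,1)$ with Lebesgue measure, which carries a countable i.i.d.\ family) take independent standard normals $(\rv{\xi}_k)_{k\in\N}\sim\distN(0,1)$. For $f\in H$ with coefficients $f_k\coloneqq\langle f,e_k\rangle_{L^2}$, set
\[
W(f)\coloneqq\sum_{k\in\N}f_k\,\rv{\xi}_k .
\]
The partial sums are Cauchy in $L^2(\Omega)$ since the $\rv{\xi}_k$ are orthonormal there and $\sum_k |f_k|^2=\|f\|_{L^2}^2<\infty$ by Parseval; hence the series converges in $L^2(\Omega)$, and $W$ is well defined and linear.

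It then remains to verify the three defining properties. Gaussianity: every partial sum is a finite linear combination of independent Gaussians, hence centered Gaussian, and a mean-square limit of centered Gaussians is again centered Gaussian; joint Gaussianity of a finite collection $W(f^{(1)}),\dots,W(f^{(r)})$ follows in the same way from that of the corresponding partial sums. Isometry: $\E[W(f)W(g)]=\sum_k f_k g_k=\langle f,g\rangle_{L^2}$, and in particular $W(f)\sim\distN(0,\|f\|_{L^2}^2)$, as claimed. Specializing $f,g$ to indicators of finite-measure sets yields the asserted covariance of $\wn$; the extension to all of $\mathcal L(\R^d)$ is a matter of convention — one restricts to finite-measure sets, since $\rv{1}_X\notin L^2$ when $\lambda(X)=\infty$, or interprets $\wn$ only on bounded pieces of $\R^d$.

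I do not anticipate a genuine obstacle: this is the textbook construction, and an alternative is to prescribe the finite-dimensional laws of $(\wn(X_1),\dots,\wn(X_r))$ as centered Gaussians with covariance matrices $(\lambda(X_i\cap X_j))_{i,j}$ — positive semidefinite, being Gram matrices of indicators in $L^2$ — check Kolmogorov consistency, invoke the Kolmogorov extension theorem, and afterwards define $W(f)$ by $L^2$-approximation with simple functions. The only points that deserve care are that Gaussianity and joint Gaussianity survive $L^2$-limits (argued via convergence of characteristic functions) and the domain issue for sets of infinite Lebesgue measure noted above; full details are given in \cite[Section~5.6]{Lif95}.
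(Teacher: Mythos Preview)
Your construction via an isonormal Gaussian process over $L^2(\R^d)$ is correct and is the standard textbook route; the paper itself does not supply a proof but simply cites \cite[Section~5.6]{Lif95}, so there is nothing to compare against beyond that reference, which treats the same construction.
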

\noindent Importantly, white noise serves as a fundamental building block in the method of moving averages, where it is convolved with a kernel function.

\subsection{The method of moving averages} \label{ssec:gaussian-random-fields}
We now discuss how to generate samples of a Gaussian random field given its covariance function. There are many approaches to this problem; see the review~\cite{LLSY19}. Most of these methods are based on computing some form of square root of the covariance kernel. Typically, they begin by discretizing the domain, yielding a finite set $D_h \subset D$. This reduces the problem to finding a square root of the resulting covariance matrix, which can be done, for example, via Cholesky decomposition or using circulant embedding combined with the fast Fourier transform (FFT).

These algorithms generate samples whose distribution matches the restriction of the true Gaussian random field to the discrete set $D_h$ exactly. However, in many practical applications, the number of discretization points $n = |D_h|$ can become very large. Since the kernel decomposition is a global operation, even FFT-based methods or sparse Cholesky decompositions require at least $\mathcal{O}(n)$ operations to evaluate a sample at a single point~$x \in D_h$. The efficiency of these classical methods lies in the fact that evaluating the sample at all $n$ points incurs roughly the same cost as evaluating just one, since the dominant cost is the kernel decomposition itself.

The nonlocal mutual dependence of the point evaluations is not a straightforward use case for quantum parallelism. Instead, we consider a construction that does not require discretization of $D$. A mathematically simple and elegant method that operates in this continuous setting is the method of moving averages~\cite{Oli95}, in which the random field is expressed as the convolution of white noise with a kernel function.

\begin{lemma}[Moving averages] \label{lem:moving-averages}
    Let $D$ be a set and $f \colon D \times \R^d \to \R$ be a function that is $L^2$-integrable with respect to its second argument, i.e.~$f \colon D \to L^2(\R^d)$.
    Further let $\wn$ be white noise on $\R^d$. Then the function $\rv{Y} \colon \Omega \times D \to \R$ defined by
    \[
    \rv{Y}(x) = \int_{\R^d} f(x, s)\dW
    \]
    is a Gaussian random field with covariance
    \[
    c(x, y) \coloneqq  \int_{\R^d} f(x, s)f(y, s) \,\mathrm{d}s.
    \]
\end{lemma}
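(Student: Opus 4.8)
The plan is to deduce everything from the \emph{linear} structure of the white-noise integral operator provided by Lemma~\ref{lem:white}, combined with a polarization identity. First I would fix $x \in D$: by hypothesis $f(x, \cdot) \in L^2(\R^d)$, so Lemma~\ref{lem:white} tells us that $\rv{Y}(x) = \int_{\R^d} f(x, s)\dW$ is a well-defined centered Gaussian random variable with $\rv Y(x) \sim \distN\!\big(0, \|f(x,\cdot)\|_{L^2}^2\big)$. In particular the map $x \mapsto \rv Y(x)$ is well-defined pointwise.

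Next I would establish joint Gaussianity of $(\rv{Y}(x_1), \dots, \rv{Y}(x_n))$ for an arbitrary finite collection $\{x_1, \dots, x_n\} \subset D$. The crucial point is to read Lemma~\ref{lem:white} as furnishing a genuine (almost surely) \emph{linear} isometry $I \colon L^2(\R^d) \to L^2(\Omega)$, $I(g) = \int_{\R^d} g(s)\dW$, with $I(g) \sim \distN(0, \|g\|_{L^2}^2)$, rather than merely a family of marginally Gaussian variables. Granting this, any linear combination $\sum_i a_i \rv{Y}(x_i) = I\big(\sum_i a_i f(x_i, \cdot)\big)$ is again centered Gaussian, since $\sum_i a_i f(x_i, \cdot) \in L^2(\R^d)$. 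A zero-mean random vector all of whose linear combinations are univariate Gaussian is multivariate Gaussian, so $(\rv{Y}(x_1), \dots, \rv{Y}(x_n)) \sim \distN(0, \Sigma)$ for some covariance matrix $\Sigma$, which it remains to identify.

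Finally I would compute $c(x,y) = \Cov(\rv Y(x), \rv Y(y))$ by polarization: applying the variance formula to $\rv Y(x) \pm \rv Y(y) = I(f(x,\cdot) \pm f(y,\cdot))$ yields $\Var(\rv Y(x) \pm \rv Y(y)) = \|f(x,\cdot) \pm f(y,\cdot)\|_{L^2}^2$, and since both variables are centered, $4\Cov(\rv Y(x), \rv Y(y)) = \|f(x,\cdot)+f(y,\cdot)\|_{L^2}^2 - \|f(x,\cdot)-f(y,\cdot)\|_{L^2}^2$, which simplifies to
\[
\Cov(\rv{Y}(x), \rv{Y}(y)) = \int_{\R^d} f(x,s)\,f(y,s)\,\mathrm{d}s,
\]
finite by Cauchy--Schwarz and equal to the asserted covariance; positive semidefiniteness of $c$ is then automatic, being the Gram matrix of the family $\{f(x_i,\cdot)\}$.

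I expect the only genuine subtlety to be the second step — insisting that Lemma~\ref{lem:white} supplies a single linear (a.s.) map on all of $L^2(\R^d)$, so that linear combinations of the $\rv Y(x_i)$ remain inside the Gaussian family; everything else is then immediate. The accompanying measurability point, that the finite-dimensional distributions are consistently defined, is handled by the standard construction of the isonormal Gaussian process and I would only remark on it rather than spell it out.
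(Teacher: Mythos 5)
Your proof is correct. The paper itself does not prove this lemma at all --- it defers to \cite{Oli95}, adding only the one-line remark that the argument rests on the mutual independence of the white-noise values at different points. What you supply is the standard isonormal-process argument: the white-noise integral of Lemma~\ref{lem:white} is a linear isometry $I\colon L^2(\R^d)\to L^2(\Omega)$ with Gaussian image, so every linear combination $\sum_i a_i\rv{Y}(x_i)=I\bigl(\sum_i a_i f(x_i,\cdot)\bigr)$ is centered Gaussian, joint Gaussianity follows, and polarization identifies the covariance as the $L^2$ inner product $\int f(x,s)f(y,s)\,\mathrm{d}s$. This is essentially equivalent to the route the paper gestures at, since the independence of $\wn$ on disjoint sets is precisely what makes $I$ an isometry into a Gaussian subspace of $L^2(\Omega)$. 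The one subtlety you flag --- that Lemma~\ref{lem:white} must provide a single a.s.\ linear map on all of $L^2(\R^d)$ rather than merely marginally Gaussian outputs --- is already granted by the paper's own phrasing (``it induces a linear integral operator''), so your argument closes cleanly.
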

\noindent The proof can be found in \cite{Oli95}. It is based on the mutual independence of the values of the white noise~$\wn$ at different points. 

A natural question is how to determine a suitable kernel $f$ given a desired covariance function~$c$. This is also discussed in~\cite{Oli95}: if $c$ is stationary, then we can search for a kernel~$f$ such that it depends only on the difference $x - s$. In this case, the random field 
\[
\rv{Y}(x) = f \ast \wn \coloneqq \int_{\R^d} f(x - s) \dW.
\]
is the convolution of $f$ and white noise.
The relation between $c$ and $f$ can then be expressed as $c = f^T \ast f$ where $f^T(x) = f(-x)$. To compute $f$ from a given $c$, we may use the convolution theorem, which states that for suitable functions $f$ and $g$, 
\[
\QFT(f \ast g) = \QFT(f)\QFT(g)
\]
where $\QFT$ is the Fourier transform, using the convention that the factor $(2\pi)^{-d}$ is only multiplied for the inverse transform. 
Applying this to $c = f^T \ast f$, and noting that $\mathcal{F}(f^T) = \overline{\mathcal{F}(f)}$, we find that a valid choice for $f$ is
\[
f = \QFT^{-1}\Big(\sqrt{\QFT(c)}\Big).
\]

\begin{example}[Gaussian covariance]\label{ex:gausscov}
As a concrete example, consider the Gaussian covariance function~\eqref{eq:gaussian-covariance}. A corresponding convolutional square root~$f$ is given by
\begin{equation} \label{eq:gaussian-kernel}
f(x) = \sqrt{C}(\xi^2\pi/2)^{-d/4}\exp\big(-|x|^2/\xi^2\big).
\end{equation}
We refer to~\cite{Oli95} for details. Expressions for kernels corresponding to other common covariance functions can also be found therein.
\end{example}

\begin{remark}[Choice of algorithm and quantum limitation]
It may seem natural to base our method on the discrete Fourier transform, since its quantum analogue can be implemented very efficiently. Doing so, however, would require encoding point evaluations $\rv{Y}(x)$, which are Gaussian and therefore unbounded, into the amplitudes of a quantum state. Because quantum states must be normalized, no fixed global rescaling can map an unbounded random variable to amplitudes. Even if one truncates the values of $\rv{Y}(x)$ to achieve an error tolerance $\varepsilon$, the required rescaling factor is proportional to $\varepsilon^{-1/2}$, which leads to a corresponding loss in efficiency. For this reason, we adopt an approach that avoids amplitude encoding of unbounded Gaussian samples.
\end{remark}

\subsection{Discretization of the Gaussian Random Field} \label{ssec:gaussian-discretization}
We depart again from the stationary setting, where there is not necessarily any relation between the variables $x$ and $s$ of the kernel~$f$. Still, in a conventional implementation of the method of moving averages, the domain~$D$ is typically identified with the domain~$\R^d$ of the convolution variable~$s$, and both are discretized in the same way. We adopt a different strategy and discretize only this integral while leaving $D$ continuous. We fix the convolution kernel~$f$, the white noise~$\wn$, and the resulting Gaussian random field~$\rv{Y}$ constructed from them as in Lemma~\ref{lem:moving-averages}.

To discretize the integral in the moving average formulation, we consider a $\sinc$ interpolation of the kernel~$f$ with grid size~$h > 0$, defined by
\[
f_\rh(x, s) \coloneqq \sum_{j \in \Z_r^d} f(x, jh) \sinc\left(\frac{s}{h} - j\right),
\]
where we restrict the sum to a finite grid $\Z_r^d \coloneqq \Z^d \cap B_r(0)$, with $B_r(0)$ denoting the (closed) Euclidean ball of radius $r$ centered at the origin.
Here, the multidimensional $\sinc$ function is given by
\[
\sinc(s) = \prod_{k=1}^d \frac{\sin(s_k\pi)}{s_k\pi}.
\]
It satisfies $\sinc(0) = 1$ and $\sinc(j) = 0$ for all $j \in \Z^d\setminus\{0\}$, so that $f_\rh(x,\bullet)$ interpolates $f(x,\bullet)$ at the grid points, that is, for all $x\in D$, 
$$f_\rh(x, jh) = f(x, jh), \quad j\in\Z_r^d.$$
Moreover, the set of $\sinc$ functions shifted by integer values is $L^2$-orthonormal, meaning
\[
\int_{\R^d} \sinc(s - j)\sinc(s - k) \,\mathrm{d}s = 0 \qquad \text{for all } j, k \in \Z^d \text{ with } j \neq k.
\]

Using Lemma~\ref{lem:moving-averages} with the interpolated kernel~$f_\rh$ gives the discretized random field
\begin{equation} \label{eq:discrete-random-field}
\rv{Y}_\rh(x) = \int_{\R^d} f_\rh(x, s) \dW = h^{d/2} \sum_{j \in \Z_r^d} f(x, jh) \wn_j,
\end{equation}
where, using the orthogonality of the shifted $\sinc$ functions, the random variables
\begin{equation} \label{eq:white-noise-sinc}
\wn_j \coloneqq h^{-d/2}\int_{\R^d} \sinc\left(\frac{s}{h} - j\right) \dW \sim \distN(0, 1)
\end{equation}
are independent and standard normally distributed. Since in this way $\rv{Y}$ and its approximation $\rv{Y}_\rh$ are defined in terms of the same random variable $\rv{W}$, they can be directly compared for each sample, rather than requiring a more complicated error measure such as a Wasserstein distance. In practice, however, the i.i.d.\ standard normal variables $\wn_j$ can be sampled directly, without evaluating the integral in~\eqref{eq:white-noise-sinc}. Computing a point evaluation of $\rv{Y}_\rh$ using~\eqref{eq:discrete-random-field} then takes $\calO(r^d)$ operations, including the sampling of these variables.

We will show that, by choosing the discretization parameter according to $h = r^{-1/2}$, the approximation $\rv{Y}_\rh$ converges exponentially to the true field~$\rv{Y}$ as a function of the truncation radius~$r$. To see this, consider the covariance function
\[
c_\rh(x, y) = h^d\sum_{j \in \Z_r^d} f(x, jh)f(y, jh)
\]
of $\rv{Y}_\rh$. This expression corresponds to applying the trapezoidal rule to approximate the integral in the covariance expression from Lemma~\ref{lem:moving-averages}. As is well known, the trapezoidal quadrature rule exhibits exponential convergence with respect to the mesh size~$h$, provided the integrand is sufficiently smooth and decays rapidly at infinity. 
\begin{lemma}[Exponentially convergent trapezoidal rule {\cite[Theorem~6.1]{TW14}}] \label{lem:trapezoid}
    Let $g \colon \R^d \to \R$ and $\alpha_g, \beta_g, \gamma_g > 0$ be constants such that $g$ extends to an analytic function
    \[
    \tilde g \colon (\R + (-\alpha_g, \alpha_g)i)^d \to \C,
    \]
    which decays exponentially in the sense that
    \begin{equation}
    \label{e:gdecay}    |\tilde g(s)| \le \gamma_g e^{-\beta_g|\Re s|}\quad\text{for all } s \in (\R + (-\alpha_g, \alpha_g)i)^d.
    \end{equation}
    Then there is a constant $C=C(\gamma_g,\beta_g,d)>0$ independent of $h$ and $r$ such that the trapezoidal approximation over the truncated  grid $\Z_r^d$ fulfills 
    \[
    \bigg| \int_{\R^d} g(s) \,\mathrm{d}s - h^d\sum_{j \in \Z_r^d} g(jh)\bigg| \leq C(e^{-2\pi \alpha_g/h} + e^{-\beta_ghr/2}).
    \]
    The constant~$C \approx (2d)^d$, however, may critically depend on the spatial dimension. 
\end{lemma}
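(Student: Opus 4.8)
The plan is to follow the standard strategy behind exponentially convergent trapezoidal rules and split the quadrature error into a \emph{discretization} part, from the infinite-grid trapezoidal rule, and a \emph{truncation} part, from discarding the nodes outside $B_r(0)$. Adding and subtracting the untruncated sum and using the triangle inequality,
\[
\Bigl|\int_{\R^d} g(s)\,\mathrm{d}s - h^d\!\!\sum_{j\in\Z_r^d} g(jh)\Bigr|
\le
\underbrace{\Bigl|\int_{\R^d} g(s)\,\mathrm{d}s - h^d\!\!\sum_{j\in\Z^d} g(jh)\Bigr|}_{E_{\mathrm{disc}}}
+
\underbrace{h^d\!\!\sum_{j\in\Z^d\setminus\Z_r^d}|g(jh)|}_{E_{\mathrm{trunc}}},
\]
so it suffices to bound $E_{\mathrm{disc}}$ by a constant times $e^{-2\pi\alpha_g/h}$ and $E_{\mathrm{trunc}}$ by a constant times $e^{-\beta_g hr/2}$; these bounds are needed in the regime of small $h$ and large $hr$ relevant to the intended choice $h=r^{-1/2}$.

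For $E_{\mathrm{disc}}$ I would use the Poisson summation formula. Analyticity on $(\R+(-\alpha_g,\alpha_g)i)^d$ together with the uniform decay~\eqref{e:gdecay} makes both $g$ and its Fourier transform $\hat g(\omega)=\int_{\R^d} g(s)e^{-i\omega\cdot s}\,\mathrm{d}s$ rapidly decreasing, so Poisson summation applies and gives
\[
h^d\sum_{j\in\Z^d} g(jh) - \int_{\R^d} g(s)\,\mathrm{d}s \;=\; \sum_{k\in\Z^d\setminus\{0\}}\hat g\!\left(\tfrac{2\pi k}{h}\right).
\]
The key estimate is $|\hat g(\omega)|\le\gamma_g'e^{-\alpha_g|\omega|_1}$, proved by shifting, one coordinate at a time, the contour in $s_l$ from $\R$ to $\R+i\,\operatorname{sgn}(\omega_l)\alpha_g$ (or to $\alpha_g-\delta$ and letting $\delta\downarrow 0$): Cauchy's theorem applies because $\tilde g$ is analytic on the strip and, by~\eqref{e:gdecay}, the integrand decays along the connecting vertical segments, so the boundary contributions vanish. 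Summing the resulting $d$-fold geometric series, $\sum_{k\neq0}e^{-2\pi\alpha_g|k|_1/h}=\bigl(\tfrac{1+q}{1-q}\bigr)^d-1$ with $q=e^{-2\pi\alpha_g/h}$, yields $E_{\mathrm{disc}}\le C(d,\gamma_g,\alpha_g)\,e^{-2\pi\alpha_g/h}$ once $h$ is small; this is also where the $d$-dependence of $C$ (of order $(2d)^d$, reflecting the $2^d$ orthants visited by the coordinatewise shifts) enters.

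For $E_{\mathrm{trunc}}$ I would compare the lattice tail to an integral. Using $|g(jh)|\le\gamma_g e^{-\beta_g h|j|}$ and that the number of lattice points at Euclidean distance $\approx\rho$ grows like $\rho^{d-1}$,
\[
E_{\mathrm{trunc}} \le \gamma_g\,h^d\!\!\sum_{|j|>r}e^{-\beta_g h|j|} \le C(d)\,\gamma_g\!\int_{hr}^\infty u^{d-1}e^{-\beta_g u}\,\mathrm{d}u \le C(d,\gamma_g,\beta_g)\,(hr)^{d-1}e^{-\beta_g hr},
\]
and the polynomial prefactor $(hr)^{d-1}$ is absorbed into half of the exponent via $(hr)^{d-1}e^{-\beta_g hr/2}\le C(d,\beta_g)$, leaving the claimed $C\,e^{-\beta_g hr/2}$ after enlarging the constant.

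The step I expect to be the main obstacle is carrying out the multivariate Poisson summation and contour shift rigorously while keeping explicit control of the $d$-dependence of the constant: one must verify that the iterated partial Fourier transforms stay analytic in the strip and keep an exponential decay in the remaining variables, so that Fubini and Cauchy's theorem may be applied coordinate by coordinate, and then bound the number of ``corner'' terms so produced. The one-dimensional case is classical. As the lemma is quoted verbatim from \cite[Theorem~6.1]{TW14}, in the paper it suffices to cite that source; the outline above is the reasoning behind it.
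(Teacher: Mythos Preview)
Your proposal is correct, and the truncation bound $E_{\mathrm{trunc}}$ matches the paper's argument essentially line for line (compare your integral comparison with their $\gamma_g\int_{\R^d\setminus B_{rh}(0)}e^{-\beta_g|x|}\,\mathrm{d}x$ followed by absorbing the $x^{d-1}$ factor into half the exponent via $\max_x x^{d-1}e^{-x/2}\le(2d+e^2)^{d-1}$).

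For $E_{\mathrm{disc}}$, however, you take a genuinely different route. You use multivariate Poisson summation and a coordinatewise contour shift to get Fourier decay $|\hat g(\omega)|\le\gamma_g'e^{-\alpha_g|\omega|_1}$, then sum the geometric series over $\Z^d\setminus\{0\}$. The paper instead reduces to the \emph{one}-dimensional result of \cite[Theorem~5.1]{TW14} and iterates: defining $I_n$ as the quantity with the first $n$ coordinates replaced by trapezoidal sums and the remaining $d-n$ still integrated, they telescope $|I_0-I_d|\le\sum_n|I_n-I_{n+1}|$ and bound each increment by applying the 1D error estimate to the slice function $y\mapsto g(hj_1,\dots,hj_n,y,x_{n+2},\dots,x_d)$, with the outer sums and integrals controlled by repeated use of $\int_{\R}e^{-\beta_g|x|/d}\,\mathrm{d}x=2d/\beta_g$. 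This is what produces the constant $\approx (2d)^d$ quoted in the lemma; your explanation in terms of ``$2^d$ orthants'' is not the mechanism the paper has in mind. Your Poisson-summation argument is the more standard and direct one, but it requires you to justify the summation formula and the contour shift in $d$ variables (exactly the ``main obstacle'' you flag), whereas the paper's iteration avoids any multivariate complex analysis by leaning on the already-proved 1D theorem.
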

\begin{proof}
Proofs of the error bound $2\gamma_g/\beta_g e^{-2\pi \alpha_g / h}$ for the trapezoidal rule on the real line can be found in~\cite[Theorem 5.1]{TW14}. This bound remains valid in the multidimensional setting, up to a different dimension-dependent multiplicative constant; see Appendix~\ref{sec:multiple-dimensions-proof} for details.  It remains to estimate the additional contribution to the error from truncating the quadrature to the finite index set $\Z_r^d$. Using the decay property~\eqref{e:gdecay}, we obtain 
\begin{align*}
\biggl| h^d \;\sum_{\mathclap{j \in \Z^d \setminus \Z_r^d}}\;\; g(jh) \biggr| &\le \gamma_gh^d \;\sum_{\mathclap{j \in \Z^d \setminus \Z_r^d}}\;\; e^{-\beta_g|jh|}
\le \gamma_g\int_{\mathrlap{\R^d \setminus B_{rh}(0)}}\qquad e^{-\beta_g|x|} \dx = \frac{S_{d-1}\gamma_g}{\beta_g}\int_{\beta_ghr}^\infty\! x^{d-1}e^{-x} \dx \\
&\le \frac{S_{d-1}C_d\gamma_g}{\beta_g}\int_{\beta_ghr}^\infty\! e^{-x/2} \dx = \frac{2S_{d-1}C_d\gamma_g}{\beta_g} e^{-\beta_ghr/2}, 
\end{align*}
where $S_{d-1}$ is the surface volume of the $(d-1)$-dimensional unit sphere and $C_d\leq(2d + e^2)^{d-1}$ is the maximum of $x^{d-1}e^{-x/2}$ for $x\geq 0$.
\end{proof}
In cases where $g$ admits an analytic extension to the entire complex space \( \mathbb{C}^d \), as is the case for the Gaussian kernel in~\eqref{eq:gaussian-kernel}, a sharper error bound can be achieved by allowing the parameter \( \alpha_g \) in Lemma~\ref{lem:trapezoid} to vary with \( h \); see~\cite{Goo49, TW14} for details.  

By applying Lemma~\ref{lem:trapezoid} to the function $g(s) = f(x, s)f(y, s)$, one finds that the discretized covariance $c_h$ closely approximates the true covariance $c$. Moreover, the error in approximating the Gaussian random field can be explicitly bounded even in the supremum norm.
\begin{lemma}[Error of $\sinc$-interpolation of Gaussian random fields] \label{lem:quadrature}
    Let $D$ be convex, and let $f \colon D\times\R^d\rightarrow \R$ 
    satisfy the following conditions:
    \begin{enumerate}
        \item There are constants $\alpha_f, \beta_f, \gamma_f > 0$ such that, for all $x\in D$, 
        $f(x,\bullet)$ extends to an analytic function 
    \[
    \tilde f(x,\bullet) \colon (\R + (-\alpha_f, \alpha_f)i)^d \to \C
    \]
    which decays exponentially in the sense that, for all $x\in D$, 
    \[
    |\tilde f(x,s)| \le \gamma_f e^{-\beta_f|\Re s|}\quad\text{for all } s \in (\R + (-\alpha_f, \alpha_f)i)^d.
    \]
    (That is, $f(x,\bullet)$ satisfies the assumption of Lemma~\ref{lem:trapezoid} uniformly for all $x$.)
    \item $f$ is differentiable with respect to $x$ and its directional derivatives decay fast, in the sense that there are constants $\gamma_f', \beta_f' > 0$ such that, for all $x \in D$ and all normalized $v\in \R^d$, the directional derivatives $\frac{\mathrm{d}}{\mathrm{d}t} f(x + tv, s)$ satisfy
    \[
    \left|\frac{\mathrm{d}}{\mathrm{d}t} f(x + tv, s)\right| \le \gamma_f' e^{-\beta_f'|s|}\quad\text{for all } s \in \R^d.
    \]
    \end{enumerate}
    Let $\rv{Y}=\int_{\R^d}f(x,s) \dW$ be a centered Gaussian random field and $\rv{Y}_\rh$ its $\sinc$-based approximation  from \eqref{eq:discrete-random-field} with truncation radius $r>0$ and grid size $h_r = r^{-1/2}$.
    Then for any $\tol > 0$ there exists some $r=r(\tol) \in \calO(\log^2(1/\tol))$ such that the error satisfies
    \[\E\left[\sup_{x\in D}|\rv{Y}(x) - \rv{Y}_{r\!,h_r}(x)|\right] \le \tol.\]
\end{lemma}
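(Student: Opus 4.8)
The plan is to recognise that $\rv{E} \coloneqq \rv{Y} - \rv{Y}_\rh$ is itself a centered Gaussian random field: by Lemma~\ref{lem:moving-averages} it equals $\int_{\R^d}\phi_x(s)\dW$ with $\phi_x \coloneqq f(x,\bullet) - f_\rh(x,\bullet) \in L^2(\R^d)$, so its pointwise standard deviation is $\sigma(x) = \|\phi_x\|_{L^2}$ and its canonical pseudometric is $\varrho(x,y) = \|\phi_x - \phi_y\|_{L^2}$. Fixing $x_0 \in D$, I would split
\[
\E\Big[\sup_{x\in D}|\rv{E}(x)|\Big] \le \E\big|\rv{E}(x_0)\big| + \E\Big[\sup_{x,y\in D}\big(\rv{E}(x)-\rv{E}(y)\big)\Big] \le \sqrt{\tfrac{2}{\pi}}\,\sigma(x_0) + C_0\int_0^{2\eta_r}\sqrt{\log N(D,\varrho,u)}\,\mathrm{d}u ,
\]
where the first term is the absolute moment of a centered Gaussian, the second is Dudley's entropy bound (absolute constant $C_0$) for the increments of $\rv{E}$, and $\eta_r \coloneqq \sup_{x\in D}\sigma(x)$ (so the $\varrho$-diameter of $D$ is at most $2\eta_r$). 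As is implicit in the statement, I take $D$ bounded with $R \coloneqq \operatorname{diam}(D) < \infty$. It thus remains to bound $\eta_r$ and the covering numbers $N(D,\varrho,u)$.

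First I would bound $\eta_r$. For fixed $x$, write $\phi_x = \big(f(x,\bullet) - P_h f(x,\bullet)\big) + \big(P_h f(x,\bullet) - f_\rh(x,\bullet)\big)$, where $P_h g \coloneqq \sum_{j\in\Z^d} g(jh)\,\sinc(\cdot/h - j)$ and $P_{r,h} g \coloneqq \sum_{j\in\Z_r^d} g(jh)\,\sinc(\cdot/h - j)$ are the untruncated and truncated $\sinc$ interpolations (so $f_\rh(x,\bullet) = P_{r,h}[f(x,\bullet)]$) and $h = h_r = r^{-1/2}$. The second term, $\sum_{j\in\Z^d\setminus\Z_r^d} f(x,jh)\,\sinc(\cdot/h - j)$, has $L^2$-norm $h^{d/2}\big(\sum_{j\in\Z^d\setminus\Z_r^d}|f(x,jh)|^2\big)^{1/2}$ by orthonormality of the shifted $\sinc$ functions, which the exponential decay of $f(x,\bullet)$ from assumption~(1) bounds by $C(d,\beta_f,\gamma_f)\,e^{-\beta_f hr/2}$ — this is the tail estimate from the proof of Lemma~\ref{lem:trapezoid}, applied to $|f(x,\bullet)|^2$. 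The first term is the classical $\sinc$-interpolation error: strip analyticity of $f(x,\bullet)$ together with its $L^1$-decay (both from assumption~(1)) lets one shift the Fourier contour into the strip, giving $|\widehat{f(x,\bullet)}(\omega)| \le C(d,\beta_f,\gamma_f)\,e^{-\alpha_f|\omega|_\infty}$; combining this with the aliasing identity $\widehat{P_h g}(\omega) = \sum_{k\in\Z^d}\widehat g(\omega + 2\pi k/h)$ for $\omega \in [-\pi/h,\pi/h]^d$ (and $0$ otherwise) and Parseval yields $\|f(x,\bullet) - P_h f(x,\bullet)\|_{L^2} \le C(d,\alpha_f,\beta_f,\gamma_f)\,h^{-d/2}e^{-\pi\alpha_f/h}$. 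All constants are uniform in $x\in D$. The choice $h = r^{-1/2}$ balances the two exponents at order $\sqrt r$, so $\eta_r \le C\,\poly(r)\,e^{-c\sqrt r}$ with $c = \min(\pi\alpha_f,\beta_f/2) > 0$. (This is the estimate behind Theorem~\ref{thm:main1}; together with $|\Z_r^d| = \calO(r^d)$ it also yields the $\polylog$ point-evaluation cost claimed there.)

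Next I would bound the covering numbers by the Euclidean ones. Since $D$ is convex, $\varrho(x,y) \le |x-y|\,\sup_{z\in D,\,|v|=1}\|\partial_v\phi_z\|_{L^2}$, and differentiating the definition of $f_\rh$ in $x$ gives $\partial_v\phi_z = \partial_v f(z,\bullet) - P_{r,h}\big[\partial_v f(z,\bullet)\big]$. By assumption~(2), $\|\partial_v f(z,\bullet)\|_{L^2} \le \gamma_f'\,\big\|e^{-\beta_f'|\cdot|}\big\|_{L^2(\R^d)}$, and by orthonormality $\|P_{r,h}[\partial_v f(z,\bullet)]\|_{L^2}^2 = h^d\sum_{j\in\Z_r^d}|\partial_v f(z,jh)|^2 \le (\gamma_f')^2\,h^d\sum_{j\in\Z^d}e^{-2\beta_f' h|j|} \le C(d,\beta_f',\gamma_f')$ for $h$ small (comparing the lattice sum to $\int_{\R^d}e^{-2\beta_f'|y|}\,\mathrm{d}y$). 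Hence $\varrho(x,y)\le L_0|x-y|$ with $L_0$ independent of $r$, so $N(D,\varrho,u)\le N(D,|\cdot|,u/L_0)\le (3RL_0/u)^d$ and
\[
\int_0^{2\eta_r}\sqrt{\log N(D,\varrho,u)}\,\mathrm{d}u \le \int_0^{2\eta_r}\sqrt{d\log(3RL_0/u)}\,\mathrm{d}u \le C(d,R,L_0)\,\eta_r\sqrt{\log(1/\eta_r)} .
\]
Plugging this into the first display (and using $\sigma(x_0)\le\eta_r$) gives $\E\sup_{x\in D}|\rv{E}(x)| \le C\,\eta_r\sqrt{\log(1/\eta_r)} \le C'\,\poly(r)\,e^{-c\sqrt r}$; because $\poly(r)$ and $\sqrt{\log(1/\eta_r)}$ grow only polynomially in $\sqrt r$, the right-hand side is $\le \tol$ as soon as $r \ge C''\log^2(1/\tol)$, so the claim holds with some $r(\tol) \in \calO(\log^2(1/\tol))$.

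The main obstacle is the uniform-in-$x$ $\sinc$-interpolation estimate: converting assumption~(1) into the exponential Fourier-tail bound needs care with the contour shift at the edge of the analyticity strip (one may have to shrink $\alpha_f$ to $\alpha_f-\delta$ for arbitrarily small $\delta>0$) and with the dimension-dependent constants produced by the tensor-product $\sinc$ and the aliasing series — precisely the factors that must stay subexponential in $r$ for the final $\log^2(1/\tol)$ scaling to persist. The only other point requiring attention is the uniform $L^2$-boundedness of $P_{r,h}$ on functions with a fixed exponential decay, which is what keeps $L_0$ independent of $r$; like the truncation estimate, this reduces to a routine comparison of a lattice sum with a convergent integral.
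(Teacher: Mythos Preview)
Your proposal is correct and shares the paper's overall architecture: recognise $\rv{E}=\rv{Y}-\rv{Y}_{r,h_r}$ as a centered Gaussian field, bound its sup via a Dudley/entropy integral, control the canonical metric by an $r$-independent Lipschitz estimate coming from assumption~(2), and integrate up to the $\varrho$-diameter $\sim\eta_r$. The paper does exactly this, citing \cite[Theorem~1.3.3]{TA07} for the entropy bound and handling the two regimes of $N(\delta)$ just as you do.

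The one genuine difference is how $\eta_r=\sup_x\|\phi_x\|_{L^2}$ is controlled. You split $\phi_x$ into a pure $\sinc$-interpolation error $f-P_hf$ and a truncation tail $P_hf-P_{r,h}f$, then handle the first by the classical Fourier/aliasing route (strip analyticity $\Rightarrow$ exponential Fourier decay $\Rightarrow$ $L^2$-error $\lesssim e^{-\pi\alpha_f/h}$). The paper instead expands $\|\phi_x\|_{L^2}^2$ and observes that both $\int f_\rh^2$ and $\int f f_\rh$ reduce, via $\sinc$-orthogonality, to trapezoidal quadratures of $f(x,\bullet)^2$ and of $g_j(s)=f(x,s)\sinc(s/h-j)$; Lemma~\ref{lem:trapezoid} is then invoked twice. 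Your route is the textbook $\sinc$-interpolation argument and makes the provenance of the rate $e^{-\pi\alpha_f/h}$ transparent, at the cost of the contour-shift/aliasing bookkeeping you flag (and a slight loss $\alpha_f\to\alpha_f-\delta$). The paper's route is more self-contained---everything is funneled through the single quadrature lemma already proved---and sidesteps Fourier analysis entirely, but requires the extra observation that $\sinc$ satisfies the hypotheses of Lemma~\ref{lem:trapezoid} (hence the bound in Appendix~\ref{sec:sinc-bound}). Both yield the same $e^{-c\sqrt r}$ variance bound and the same final $r(\tol)\in\calO(\log^2(1/\tol))$.
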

\begin{proof}
\renewcommand{\rh}{r}
We write $\rv{Y}_r = \rv{Y}_{r\!,h_r}$ and $f_r = f_{r\!,h_r}$, since $h_r$ is fully determined in terms of $r$.
As both random fields are defined in terms of the white noise $\wn$, we obtain an explicit error for each $\omega \in \Omega$, specifically
\[
\rv{E}(x) \coloneqq \rv{Y}(x) - \rv{Y}_\rh(x) = \int_{\R^d} f(x, s) - f_\rh(x, s)  \dW.
\]
Clearly, the error $\rv{E}(x)$ is itself a Gaussian random field with the covariance kernel $f - f_\rh$. Recall from Lemma~\ref{lem:white} that 
$\E[\rv{E}(x)]=0$ and that \begin{align}\nonumber\operatorname{Var}[\rv{E}(x)]=\E[\rv{E}(x)^2]&=\|f(x, s) - f_\rh(x, s) \|^2_{L^2(\R^d)}\\&=\int_{\R^d} f(x, s)^2 \,\mathrm{d}s - 2 \int_{\R^d} f(x, s)f_\rh(x, s) \,\mathrm{d}s + \int_{\R^d} f_\rh(x, s)^2 \,\mathrm{d}s.\label{e:applyL10}\end{align}
Due to the orthogonality of shifted $\sinc$ functions, we observe 
\[
\int_{\R^d} \!f_\rh(x, s)^2 \,\mathrm{d}s = h_r^d\sum_{\mathclap{j,k \in \Z^d_r}} f(x, jh_r)f(x, kh_r) \int_{\R^d} \!\sinc(s - j)\sinc(s - k) \,\mathrm{d}s =h_r^{d} \sum_{j \in \Z^d_r} f(x, jh_r)^2.
\]
The right-hand side is precisely the truncated trapezoidal quadrature used in Lemma~\ref{lem:trapezoid} for the function $g(s)\coloneqq f(x, s)^2$.
It is easily checked that $g$ satisfies the assumption of Lemma~\ref{lem:trapezoid} with constants
$\alpha_{g} = \alpha_f$, $\beta_{g} = 2\beta_f$, and  $\gamma_{g} = \gamma_f^2$. Thus, the right-hand side of \eqref{e:applyL10} is bounded by 
\begin{equation}\label{e:applyL10j}
2\,\biggl|\int_{\R^d} f(x, s)f_\rh(x, s) \,\mathrm{d}s -h_r^d\sum_{j \in \Z^d_r} f(x, jh_r)^2\biggr|
\end{equation}
up to an additive error term $C\left(e^{-2\pi \alpha_f/h_r} + e^{-4\beta_fh_rr}\right)$. Using the definition of $f_\rh$ and introducing $g_j(s) \coloneqq f(x, s)\sinc(s/h_r - j)$ for $j\in \Z^d_r$, the term \eqref{e:applyL10j} is bounded by 
\begin{equation}\label{e:applyL10j2}
2\sum_{j\in\Z^d_r}|f(x, jh_r)|\biggl|\int_{\R^d} g_j(s) \,\mathrm{d}s - h_r^dg_j(jh_r)\biggr|,
\end{equation}
since $g_j(jh_r)=f(x,jh_r)$ and $g_j(kh_r)=0$ for all $k\neq j$. This again involves trapezoidal quadrature errors for the integrals of the $g_j$ (without truncation). The $g_j$ satisfy the assumptions of Lemma~\ref{lem:trapezoid} with constants $\alpha_{g_j} = \alpha_f$, $\beta_{g_j} = \beta_f$, and $\gamma_{g_j} = 2e^{\pi \alpha_f}\gamma_f$,
because the $\sinc$ function is analytic on the whole complex plane,  and $\lvert\sinc(z)| \le 2e^{\pi\lvert\Im z|}$ for all $z\in \R + (-\alpha_f, \alpha_f)i \subset \C$; see Appendix~\ref{sec:sinc-bound}.
Thus, Lemma~\ref{lem:trapezoid} yields 
$$\biggl|\int_{\R^d} g_j(s) \,\mathrm{d}s - h_r^dg_j(jh_r)\biggr|\leq C(\gamma_f,\alpha_f,d)e^{-2\pi \alpha_f/h_r}.$$
As a consequence, the sum~\eqref{e:applyL10j2} can then be bounded by
\begin{align*}
\sum_{j \in \Z^d_r} |f(x, jh_r)| C(\gamma_f,\alpha_f,d)e^{-2\pi \alpha_f/h_r} &\le h_r^{-d} C(\gamma_f,\alpha_f,d)e^{-2\pi \alpha_f/h_r} h_r^d \sum_{j \in \Z^d_r} \gamma_f e^{-|\beta_f|} \\
&\le C(\gamma_f, \alpha_f, \beta_f, d) e^{-\pi \alpha_f/h_r}.
\end{align*}
Note that the term $h_r^{-d}$ has been absorbed by reducing the exponential decay rate and by adjusting the multiplicative constant, which may in any case grow exponentially with the spatial dimension~$d$. Altogether, we achieve the error bound
\begin{equation} \label{eq:error-variance}
\Var[\rv{E}(x)]\leq C(\gamma_f, \alpha_f, \beta_f, d) (e^{-\pi \alpha_f/h_r} + e^{-4\beta_fh_rr}) \le e^{-b_{\Var}\sqrt{r}}
\end{equation}
with $C_{\Var} = C(\gamma_f, \alpha_f, \beta_f, d)$ and $b_{\Var} = C(\alpha_f, \beta_f)$, where we used $h_r = r^{-1/2}$.

To show the boundedness of this random field, we now apply~\cite[Theorem~1.3.3]{TA07}, which gives the following entropy integral bound
\begin{equation} \label{eq:entropy-integral}
\E\left[\sup_{x\in D} |\rv{E}(x)|\right] \le \int_0^\infty (\log N(\delta))^{1/2} \,\mathrm{d}\delta.
\end{equation}
Here, $N(\delta)$ denotes the minimal number of balls of radius~$\delta$ (in the pseudo-metric induced by the error field~$\rv{E}$) required to cover the domain~$D$. These balls are defined as
\[
B_\delta(x) \coloneqq \{ y \in D \mid \E[(\rv{E}(x) - \rv{E}(y))^2] \le \delta^2 \}.
\]
Irrespective of the distance between $x$ and $y$, we can estimate
\begin{align*}
\E\big[(\rv{E}(x) - \rv{E}(y))^2\big] \le 2\E\big[\rv{E}(x)^2\big] + 2\E\big[\rv{E}(y)^2\big]&=2\Var[\rv{E}(x)]+2\Var[\rv{E}(y)]\\&\leq 2C_{\Var} e^{-b_{\Var}\sqrt{r}} \eqqcolon \mu(r)^2
\end{align*}
using \eqref{eq:error-variance}. This means for $\delta\ge\mu(r)$ we have $N(\delta) = 1$. 
However, to evaluate the integral~\eqref{eq:entropy-integral}, we also require a bound as $x - y \to 0$. For any $v \in \R^d$ with $|v| = 1$, we thus consider $y = x + tv$ where $t \to 0$. The mean value theorem implies that there is some $0 < \tilde t < t$ such that, together with assumption (2),
\[
|f(x + tv, s) - f(x, s)| \le t |\tfrac{\mathrm{d}}{\mathrm{d}t} f(x + \tilde t v, s)| \le \gamma_f' t e^{-\beta_f'|s|}.
\]
In this regime, we get
\begin{align*}
&\E[(\rv{E}(x) - \rv{E}(x + tv))^2] = \int_{\R^d} \big((f(x, s) - f(x + tv, s)) - (f_\rh(x, s) - f_\rh(x + tv, s))\big)^2 \,\mathrm{d}s \\
&\qquad\le 2 \int_{\R^d} (f(x, s) - f(x + tv, s))^2 + \bigg(\sum_{j \in \Z_r^d} (f(x, s) - f(x + tv, s)) \sinc\left(\frac{s}{h_r} - j\right) \bigg)^2 \,\mathrm{d}s \\
&\qquad= 2 t^2(\gamma_f')^2\bigg(\int_{\R^d}\! e^{-2\beta_f|s|} \,\mathrm{d}s + h_r^d\sum_{j \in \Z_r^d} e^{-2\beta_f|s|}\bigg) \le 2t^2 S_{d-1}C_d (\gamma_f')^2 / \beta_f'.
\end{align*}
using an analogous calculation as in the proof of Lemma~\ref{lem:trapezoid}. Defining $M \coloneqq S_{d-1}C_d \gamma_f'^2/\beta_f'$ we get
\[
N(\delta) \le \tilde N(\delta / M),
\]
where $\tilde N$ is defined equivalently to $N$ but for balls in the Euclidean metric, that is, 
\[
\tilde N(\delta) \le (C_N \operatorname{diam}(D)/\delta)^d
\]
for some universal constant $C_N$. Combining this with the previous bound for $N$, we conclude
\[
N(\delta) \le \begin{cases}
1 & \delta \ge \mu(r) \\
(C'/\delta)^d & \text{otherwise}
\end{cases}
\]
with $C' \coloneqq MC_N \operatorname{diam}(D)$.
Together with~\eqref{eq:entropy-integral}, this yields
\begin{align*}
\E\left[\sup_{x\in D} |\rv{E}(x)|\right] &= \sqrt{d}\int_0^{\mu(r)} (\log(C'/\delta))^{1/2}  \,\mathrm{d}\delta\\ &= \sqrt{d}\left(\mu(r)\sqrt{\log(C'/\mu(r))} + \tfrac12 \sqrt{\pi} C' \operatorname{erfc}(\log(C'/\mu(r)))\right) \\
&\le \sqrt{dC_{\Var}}e^{-b_{\Var} \sqrt{r}/2}\left(\sqrt{\log(C') + b_{\Var} \sqrt{r}} + \tfrac{1}{2}\sqrt{\pi}\right) 
\le C(d, \operatorname{diam}(D), f) e^{-b_{\Var} \sqrt{r}/2}
\end{align*}
where we used $\operatorname{erfc}(x) \le e^{-x}$. This proves the assertion, noting that the upper bound decays exponentially with $\sqrt{r}$.
\end{proof}

\begin{remark}[Dependence on correlation length] \label{rem:correlation-length}
Let $\rv{Y} \colon \Omega \times D \to \R$ be a random field on $D \subset \R^d$. Then we can consider the random field $\rv{Y}^{(\xi)} \coloneqq \rv{Y}(\xi^{-1} \bullet)$, where the spatial coordinate is scaled down by a factor~$0 < \xi < 1$. The convolution kernel for this random field is explicitly given by $f(\xi^{-1} x, s)$ or equivalently $\xi^{-d/2}f(\xi^{-1} x, \xi^{-1} s)$, where the latter preserves stationarity of the kernel. If, for example, $\rv{Y}$ is produced using the Gaussian convolution kernel~\eqref{eq:gaussian-kernel} with characteristic length~$1$, then $\rv{Y}^{(\xi)}$ can be obtained by the same kernel with characteristic length~$\xi$. 

The error measure of Lemma~\ref{lem:quadrature} is preserved under this spatial rescaling.
However, the domain~$\xi D$ of $\rv{Y}^{(\xi)}$ is now smaller by the factor~$\xi$. Thus, varying the correlation length in Lemma~\ref{lem:quadrature} is equivalent to varying the size of the domain~$D$. The diameter of $D$ only enters logarithmically through the constant $C'$, but there may be additional implicit dependencies through the constants $\beta_f$ and $\gamma_f$. The latter can be avoided for stationary covariance kernels, i.e.~$f(x, s) = f(x - s)$, by choosing a slightly different truncation of the sum than~\eqref{eq:discrete-random-field}, centered around $x$, namely
\[
\rv{Y}_\rh(x) = h^{d/2} \;\sum_{\mathclap{j \in \Z^d_r + \lfloor x / h \rfloor}}\quad f(x - hj) \rv{W}_j.
\]
Note that, in any case, the scaling argument applied in the construction of $\rv{Y}^{(\xi)}$ applies here as well, and one can calculate $\rv{Y}_\rh(\xi^{-1} \bullet) = \rv{Y}^{(\xi)}_{r\!,\xi h}$, which indicates that the grid size~$h$ should scale linearly with the correlation length.
\end{remark}

\begin{remark}[Error in $L^p$ norm]
A simpler proof than that of Lemma~\ref{lem:quadrature} is possible for bounding the error in an $L^p$ norm for $1 \le p < \infty$. By combining~\cite[Equation~3.32]{Hen24} with H\"older's inequality, this error is bounded by
\[
\E[\|\rv{Y} - \rv{Y}_\rh\|_{L^p}] \le C_p \,\big\lVert\Var[\rv Y(\bullet) - \rv Y_\rh(\bullet)]^{1/2}\big\rVert_{L^p},
\]
where $C_p > 0$ is a constant depending only on $p$.
\end{remark}

Theorem~\ref{thm:main1} follows as a corollary of Lemma~\ref{lem:quadrature} using the complexity $\calO(r^d)$ of sampling $\rv{Y}_\rh$ with~\eqref{eq:discrete-random-field}.
Compared to existing methods in the literature, our approach significantly reduces the number of required quadrature points for smooth kernels. This reduction is often negligible in classical applications, where the number of evaluation points $x_0, \dots, x_{n-1}$ dominates the complexity instead. However, it becomes essential in the quantum setting. To realize the exponential advantage of quantum parallelism, described in~\eqref{eq:diagonal-scaling}, the computations at each point need to be fully independent and intermediate results cannot be reused. This independence is a key structural requirement that we will exploit in the following sections.

\section{Quantum sampling of random fields}
\label{sec:sampling}

The sampling method from the previous section yields a classical algorithm
\begin{equation} \label{eq:vec-algorithm}
\mathcal{A} \colon \R^d \times \R^{r^d} \to \R, \quad \text{such that} \quad \mathcal{A}(x, (\wn_k)_{k = 1}^{r^d}) \sim \rv{Y}(x),
\end{equation}
which produces samples from the Gaussian random field at the evaluation point~$x$. Recall that the randomness is introduced via a small number of independent standard normal variables~$\wn_k$.
This approximation is efficient in the sense that the number of random variables~$r^d$ scales only polylogarithmically with the approximation error.
Still, we further reduce the dependence on the number of random variables through the use of \emph{seekable pseudorandom number generators}, in order to use quantum parallelism as in~\eqref{eq:diagonal-scaling} to evaluate $\calA$ for many inputs and samples at once.

\subsection{Seekable pseudorandom number generators}
Random bits serve as the foundation for sampling from arbitrary distributions. To sample from a target distribution, we may concatenate $\Nseed$ random bits into an integer $j \in [2^{\Nseed}]$, and then compute the value
\begin{equation} \label{eq:pdf}
\Phi^{-1}(2^{-\Nseed} j)
\end{equation}
where $\Phi^{-1}$ denotes the inverse cumulative distribution function (CDF) of the desired distribution. For the normal distribution, the Box-Muller method~\cite{BM58} gives accurate samples in a similar fashion.

While classical computers have to rely on pseudorandomness for sampling the random bits for $j$, quantum computers can, in principle, generate truly random bits by measuring a qubit in the state
\[
\ket{+} = \mathrm{H}\ket{0} = (\ket{0} + \ket{1})/\sqrt{2}.
\]
However, this method requires one qubit for each random bit, which becomes inefficient when many random numbers are needed, as in the case of sampling a random field. The inefficiency becomes especially pronounced when the correlation length~$\xi$ becomes small as considered in Remark~\ref{rem:correlation-length}. Although the number of bits required per point of the random field remains bounded independently of~$\xi$, the total number of bits needed across all points grows polynomially with~$\xi^{-1}$. This creates the need for a more efficient approach for generating random numbers in the quantum setting.

To address this, similar to classical methods, we turn to \emph{pseudorandom number generators} (PRNGs). These are functions that, given a random seed, produce output sequences that appear statistically random. One class of PRNGs particularly well-suited to our application is the family of permuted congruential generators (PCGs)~\cite{ON14}. PCGs are efficient in terms of state size and simple to implement on quantum hardware. Moreover, they allow for efficient seeking to arbitrary positions in the sequence. This is a key requirement for quantum parallelism, as noted earlier in~\cite{KMTY21, MS20}, where PCGs were already used in the quantum setting.
We now formalize the interface required of a pseudorandom number generator for our purposes.

\begin{definition}[Seekable pseudorandom number generator]
A \emph{seekable pseudorandom number generator} is a classical algorithm~$\RNG$ that takes as input a \emph{state size}~$\Nseed$, a \emph{seed}~$s \in [2^\Nseed]$, and an index~$j \in [2^M]$ and outputs a bit~$\RNG(s, j) \in \{0, 1\}$. We require the runtime of $\RNG$ to be $\calO(\poly \Nseed)$.
\end{definition}

The algorithm should produce seemingly random bits; that is,
$\text{``}\RNG(\rv{s}, \bullet) \sim \distB(2^M)\text{''}$,
where $\rv{s} \sim \distB(\Nseed)$. However, this is not achievable in the strict sense, since $\RNG$ is deterministic by design. Clearly, the number of possible output sequences is larger than the number of possible seeds. 
A commonly accepted middle ground are \emph{cryptographically secure} pseudorandom number generators. These generators do not produce truly random sequences, but their output is computationally indistinguishable from true randomness. In other words, it is hard (relative to the state size~$M$) to algorithmically distinguish between a pseudorandom sequence and a truly random one.

\begin{definition}[Cryptographically secure random generator {\cite[Definition~3.14]{KL14}}] \label{def:secure-prng}
    A pseudorandom number generator~$\RNG$ is called \emph{asymptotically secure}, if no algorithm~$\mathcal{D}$ with polynomial runtime in the state size~$M \in \N$ can efficiently distinguish between the pseudorandom sequence and a truly random sequence $\rv b \in \distB([2^M])$. More precisely, the value
    \[
    |\Pr(\mathcal{D}(\RNG(\rv{s}, \bullet)) = 1) - \Pr(\mathcal{D}(\rv b) = 1)|,
    \]
    where $\rv{s} \sim \distB(\Nseed)$, should be negligible with respect to $M$. See \cite{KL14} for the quantitative definition of ``negligible''.
\end{definition}

While secure pseudorandom generators can be constructed under relatively mild assumptions; see~\cite[Section~7.2]{KL14}, the PCGs we suggest are not believed to be asymptotically secure. In principle, it may be possible to construct a \emph{malicious} quantity of interest, specifically designed to exploit statistical weaknesses in the PCG, leading to poor approximation of the random field. Despite this, we use PCGs for the numerical experiments due to their practical efficiency and simplicity, as is common also in classical applications. This is justified further by standardized statistical tests which PCGs pass; see~\cite{ON14}. Our theoretical results, however, are stated under the assumption that a cryptographically secure pseudorandom number generator is used.

\subsection{Quantum sampling by diagonal scaling}
We now combine a random generator~$\RNG$ with the algorithm~$\calA$ from~\eqref{eq:vec-algorithm} for point evaluation of the random field, as well as a structured way to compute evaluation points $x_0, \dots, x_{n-1} \in D$, such as placing them at nodes of a regular grid. This yields an algorithm that computes
\[
[n] \times [M] \to \R, \qquad (j, \rv{s}) \mapsto \rv{Y}(x_j),
\]
where $\rv{s} \sim \distB(\Nseed)$. Strictly speaking, the distribution of the values $\rv{Y}(x_j)$ generated in this way may differ from the ideal case in~\eqref{eq:vec-algorithm}, since $\RNG$ is deterministic. However, if the random generator is cryptographically secure, the distributions are computationally indistinguishable.

We aim to prepare quantum states in which samples of the random field are encoded in the amplitudes, rather than in binary registers. A natural target is the quantum state
\[
\rv{Y}(x_0) \ket{0} + \dots + \rv{Y}(x_{n-1})\ket{n-1}.
\]

However, the amplitudes of quantum states cannot exceed $1$. Since the values of $\rv{Y}$ are normally distributed and therefore unbounded, this constraint cannot be satisfied by applying a fixed linear rescaling. 
Instead, we could ensure such a bound, for example, by applying some bounded non-linear transformation function~$\rho \colon \R \to [-1, 1]$ to its values, i.e.\ we consider
\[
\tgrf \coloneqq \rho \circ \rv Y.
\]

We note that the covariance structure of the transformed field~$\tgrf$ is not directly determined by the original covariance~$c$; see~\cite{GHCL13} for further discussion. 
Nevertheless, such transformations of Gaussian random fields, used to impose lower and upper bounds, or even to restrict outputs to finitely many values in classification tasks, are natural in many modeling contexts. One notable example is the generation of two-phase microstructures in heterogeneous materials.

\begin{example}[Generation of two-phase microstructure]
    One possible application for our algorithm is generating properties of a material with a two-phase microstructure, where some quantity takes two values $0 < Z_0 < Z_1$ depending on whether $\rv Y(x)$ is larger or smaller than some cutoff value~$b \in \R$. Specifically, $
    \tgrf(x) = Z_0 + (Z_1 - Z_0) \rho(a\rv Y(x) - b)$, 
    where $a > 0$ determines the thickness of the transition layer, and $\rho$ is the sigmoid function from Figure~\ref{fig:sigmoid}, which also shows an example realization of $\tgrf$ as defined here. Clearly, in this case $Z_0$ and $Z_1$ are lower and upper bounds on the coefficient, ensuring its uniform ellipticity. Information about the microstructure is imposed through the covariance function of $\rv{Y}$. One possibility to encode this data is modeling the covariance as a sum of Gaussian curves, which could be implemented as a sum of independent random fields, see~\cite[Eqn.~23]{Oli95}.
\end{example}

If the transformation $\rho$ is classically computable, we obtain a classical algorithm 
\[
\tilde{\mathcal{A}} \colon [n] \times [2^\Nseed] \to [-1, 1], \quad (j, \rv{s}) \mapsto \tgrf(x_j) = \rho(\rv{Y}(x_j))
\]
that evaluates the transformed random field. We can extend this construction by using the pseudorandom number generator~$\RNG$ to generate multiple seeds from a single one. This yields the algorithm
\[
\tilde{\calA}^{(m)} \colon [n] \times [2^m] \times [2^\Nseed] \to [-1, 1], \qquad (j, k, \rv s) \mapsto \tilde{\calA}(\RNG(\rv s, k), j).
\]
For a fixed seed~$\rv{s}$, the algorithm~$\tilde{\mathcal{A}}^{(m)}$ computes the values of the transformed field $\tgrf = \rho \circ \rv Y$ at the interpolation points~$x_j$ for $2^m$ independent samples.

We now combine~$\tilde{\mathcal{A}}^{(m)}$ with a standard construction from quantum computing to prepare quantum states encoding these samples.

\begin{lemma}[Amplitude encoding of classical algorithms] \label{lem:amplitude-encoding}
Let $n \in \N$ and let $\calC \colon [n] \to [-1, 1]$ be a function that is classically computable to arbitrary precision. Then, for any target accuracy $\tol > 0$, there exists a quantum algorithm~$U_\calC$ such that
\[
U_\calC \ket{0}_1\ket{0}_\anc\ket{j}_{\lceil \log n \rceil} = \calC(j) \ket{0}_1\ket{0}_\anc\ket{j} + \sqrt{1 - \calC(j)^2} \ket{1}_1\ket{0}_\anc\ket{j},
\]
holds up to an error of $\tol > 0$. The required number of ancilla bits~$\anc \in \N$ and the complexity of $U_\calC$ scale linearly with the classical complexity of computing $\calC$ up to precision $\tol$.
\end{lemma}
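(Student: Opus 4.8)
The plan is to assemble the unitary $U_\calC$ from three standard quantum-computing primitives: reversible classical arithmetic, a controlled rotation gate, and an uncomputation step, using the textbook trick for converting a classical value into an amplitude. First I would invoke the fact that any classically computable function can be realized reversibly: given that $\calC$ can be computed to precision $\tol$ by a classical circuit of size $T = T(\tol)$ (which by hypothesis is the relevant complexity parameter), there is a reversible Boolean circuit --- hence a quantum circuit $V_\calC$ of $\calO(T)$ Toffoli/CNOT/NOT gates acting on $\calO(T)$ work qubits --- that maps $\ket{j}_{\lceil\log n\rceil}\ket{0}_{b}\ket{0}_{\anc'} \mapsto \ket{j}\ket{\widetilde\calC(j)}_{b}\ket{\mathrm{garbage}(j)}_{\anc'}$, where $\widetilde\calC(j)$ is a fixed-point binary approximation of $\calC(j)$ on $b = \calO(\log\tol^{-1})$ bits satisfying $|\widetilde\calC(j) - \calC(j)| \le \tol/2$, and $\anc'$ counts the ancillas needed for the intermediate computation (Bennett's reversibilization of an irreversible computation). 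The total ancilla count $\anc$ and gate count are thus $\calO(T)$, as claimed.

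Next I would apply a rotation conditioned on the value register. The key subroutine is a gate $R$ that, reading the $b$-bit register holding $y \approx \calC(j)$, rotates the designated single qubit $\ket{0}_1$ into $y\ket{0}_1 + \sqrt{1-y^2}\ket{1}_1$. Concretely this is implemented by first computing $\theta = \arccos(y)$ into a fresh $\calO(\log\tol^{-1})$-bit register via reversible arithmetic (a polylog-size circuit, since $\arccos$ is smooth away from $\pm 1$ and the value register already encodes $y$; near $\pm 1$ one clips appropriately, which only perturbs the amplitude by $\calO(\tol)$), and then applying the standard controlled-$R_y(2\theta)$ cascade --- one $R_y(2^{-i}\pi)$ gate controlled on bit $i$ of the $\theta$-register --- which costs $\calO(\log\tol^{-1})$ single-qubit rotations and realizes the rotation up to error $\calO(\tol)$ from truncating $\theta$. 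This yields
\[
\ket{0}_1\ket{0}_{\anc}\ket{j} \;\mapsto\; \bigl(\widetilde\calC(j)\ket{0}_1 + \sqrt{1-\widetilde\calC(j)^2}\,\ket{1}_1\bigr)\otimes\ket{\mathrm{garbage}(j)}_{\anc}\otimes\ket{j},
\]
up to total error $\calO(\tol)$ in operator norm.

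Finally I would uncompute: run $V_\calC^{\dagger}$ (together with the reverse of the $\arccos$ computation) to clear the value register, the $\theta$-register, and all garbage back to $\ket{0}_\anc$. Because $V_\calC$ and the rotation-angle computation act only on the $\ket{j}$ and ancilla registers and not on the single qubit $\ket{0}_1$ (the rotation's \emph{control}, not its target, is the ancilla content), the uncomputation commutes with the already-applied $R_y$ cascade and restores $\ket{0}_\anc$ exactly, leaving
\[
U_\calC\ket{0}_1\ket{0}_\anc\ket{j} = \calC(j)\ket{0}_1\ket{0}_\anc\ket{j} + \sqrt{1-\calC(j)^2}\,\ket{1}_1\ket{0}_\anc\ket{j} + \calO(\tol).
\]
Rescaling the accuracy budget ($\tol \to c\,\tol$ for a suitable constant $c$) absorbs the finitely many $\calO(\tol)$ contributions, and since all three stages have size $\calO(T) = \calO(T(\tol))$ with ancilla count $\calO(T(\tol))$, both claimed scalings follow.

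The main obstacle --- really the only delicate point --- is controlling the amplitude error: the map $y \mapsto \sqrt{1-y^2}$ has infinite derivative at $y = \pm 1$, so a naive bound on how the $\calO(\tol)$ error in $\widetilde\calC(j)$ and in $\theta$ propagates to the final amplitude could blow up. This is handled by noting that the \emph{state} error is what matters, not the error in $\sqrt{1-y^2}$ alone: writing the target as $\cos\theta\ket{0} + \sin\theta\ket{1}$, an additive error $\Delta\theta$ in the angle produces a state error of exactly $2|\sin(\Delta\theta/2)| \le |\Delta\theta|$, which is uniformly controlled with no singularity. So one just needs $|\arccos(\widetilde\calC(j)) - \arccos(\calC(j))| + (\text{truncation of }\theta)$ to be $\calO(\tol)$, and near $y=\pm1$ the clipping $y \mapsto \min(1,\max(-1,y))$ combined with the Lipschitz bound $|\arccos a - \arccos b| \le \pi\sqrt{|a-b|}$ (on $[-1,1]$) gives $|\Delta\theta| = \calO(\sqrt{\tol})$; taking the internal precision to be $\tol^2$ instead of $\tol$ (still polylog in $\tol^{-1}$, preserving all complexity claims) makes this $\calO(\tol)$ and closes the argument.
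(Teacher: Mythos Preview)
Your proposal is correct and follows essentially the same three-stage architecture as the paper's proof: reversibly compute $\cos^{-1}(\calC(j))$ into a fixed-point register (the paper does this in one shot rather than first writing $\widetilde\calC(j)$ and then applying $\arccos$, but that is cosmetic), apply a cascade of controlled $R_y$ rotations reading the bits of that register, and uncompute to clear the ancillas. The paper's proof is considerably terser than yours --- it fixes the angle register to $m = \lceil \log_2(\pi/(1-\cos\tol))\rceil$ bits and simply asserts correctness --- and in particular does not discuss the derivative blow-up of $\arccos$ near $\pm 1$ that you flag. Your workaround (compute $\calC$ to internal precision $\tol^2$ so that the $\sqrt{\tol}$-type angle error becomes $\calO(\tol)$) is sound; note, however, that strictly speaking this makes the complexity scale with $T(\tol^2)$ rather than $T(\tol)$, so your final sentence slightly overstates the match with the lemma's ``linear in $T(\tol)$'' claim --- though for every application in the paper $T$ is polylogarithmic and the distinction is immaterial.
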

\begin{proof}
See Figure~\ref{fig:amplitude-encoding} for the circuit implementing $U_\calC$. Given a basis state~$\ket{j}$ it first computes the integer~$\theta = \lfloor 2^m \cos^{-1}(\calC(j)) / \pi \rfloor$ into an additional register~$\ket{k}$ of $m \coloneqq \lceil \log_2(\pi/(1-\cos(\tol)))\rceil$ qubits. This operation can be implemented using a number of gates linear in the number of operations of $\calC$ since it is reversible, see~\cite{Tof80}, potentially using additional space $\anc' \in \N$ and producing garbage data~$\ket{\ast}_{\anc'}$ in this additional space, leading to a state
\[
\ket{0}_1\ket{\ast}_{\anc'}\ket{\theta}_m\ket{j}.
\]
It then applies multiple controlled rotation gates to implement the operation
\[
\ket{\theta}_m \ket{0} \mapsto \ket{\theta}_m \mathrm{R}_\mathrm{Y}(2^{-m+1} \theta \pi) \ket{0} = \ket{\theta}_m (\cos(2^{-m} \theta \pi) \ket{0} + \sin(2^{-m} \theta \pi) \ket{1}).
\]
Afterwards, the registers~$\ket{\theta}$ and $\ket{\ast}_{\anc'}$, which remained unchanged, are ``uncomputed'' again, meaning the inverse circuit for computing $\ket{\theta}$ is applied, resulting in the two registers being returned to their initial state.
The correctness of the algorithm, i.e.~$\cos(2^{-m+1} \theta \pi) \approx \calC(j)$, can easily be checked. The total number of ancilla bits is $\anc = m + \anc'$.
\end{proof}

\begin{figure}
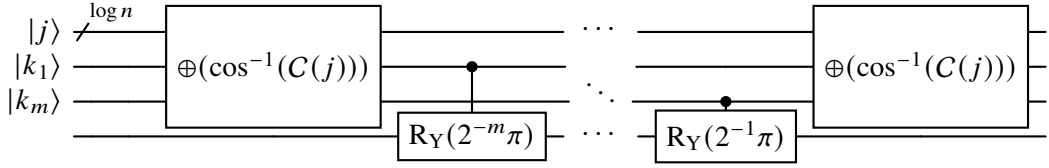

\centering
\includestandalone{circuit_rotation}
\caption{The circuit~$U_\calC$ from Lemma~\ref{lem:amplitude-encoding}. The gate $\oplus(\cos^{-1}(\calC(j)))$ stores its result in a fixed-point encoding as detailed in the proof of the lemma. In our implementation it is composed of many fixed-point arithmetic operations. This circuit does not show the additional space $\anc'$.
}
\label{fig:amplitude-encoding}
\end{figure}

The algorithm $U_\calC$ corresponds to a diagonal scaling of the basis states~$\ket{j}$ by $\calC(j)$.
The algorithm~$\rv{U}^{(m)}$ from Theorem~\ref{thm:main2} is obtained by $\calC = \tilde{\calA}^{(m)}(\bullet, \bullet, \rv s)$,
where $\rv s$ is sampled classically. A minor technical point is that the output of $\tilde{\calA}^{(m)}$ is not exactly distributed as the values of $\tgrf$ due to the use of a pseudorandom number generator. However, owing to Definition~\ref{def:secure-prng}, the two distributions are not efficiently distinguishable. If the values produced by $\rv{U}^{(m)}$ measurably deviated from the ideal distribution, then this could be used to distinguish between pseudorandom and truly random numbers, thus contradicting Definition~\ref{def:secure-prng}.

\section{Quantum estimation of stochastic moments}
\label{sec:estimation}

Quantum computers are naturally suited for estimating stochastic quantities, as measurements of quantum states inherently produce random outcomes. In this section, we demonstrate how to encode a linear quantity of interest derived from the transformed random field into the amplitude of a quantum state. This enables the use of quantum algorithms to estimate expected values.

We begin with a standard Monte Carlo approach, which has a runtime scaling of $\calO(\tol^{-2})$ for a target accuracy~$\tol$. We then show how this can be improved to $\calO(\tol^{-1})$ using amplitude estimation techniques. Finally, we discuss how to extend this framework to compute higher-order moments using nonlinear quantum algorithms developed in~\cite{DP24a}.

\subsection{Monte-Carlo sampling and amplitude estimation} \label{ssec:amplitude-estimation}

Let us consider a linear quantity of interest~
\begin{equation} \label{eq:qoi}
\lambda(\tgrf) = q_0 \tgrf(x_0) + \dots + q_{n-1} \tgrf(x_{n-1}),
\end{equation}
of the transformed random field~$\tgrf \colon \Omega \times D \to [-1, 1]$, with coefficients~$q \in \R^n$, suitably rescaled such that $|q|_1 = 1$.

The easiest way to classically estimate the expected value $\E[\lambda(\tgrf)]$ is an empirical mean over independent samples of the quantity of interest. An explicit error bound can be obtained, for example, via Hoeffding's inequality. It guarantees that an error of at most~$\tol > 0$ is achieved with probability at least $1 - \delta$ if $\tol^{-2} \log(2/\delta)$ independent samples are used. Here, we employed the assumption $|q|_1 = 1$ to bound $|\lambda(\tgrf)| \le 1$. Combining this with the cost of evaluating the quantity of interest for a single sample, we obtain an overall runtime proportional to~$\tol^{-2} n$ scaling linearly with the number of points~$n$.

On a quantum computer, we can compute all of these samples in parallel. Specifically, let $U_q$ be a quantum algorithm that prepares the state corresponding to the element-wise square root of the vector $q$, and let $U_{q,\pm}$ be a quantum algorithm that applies a phase flip for the indices where $q_j<0$, concretely
\begin{equation} \label{eq:prepare}
U_q\ket{0}_{\lceil \log n \rceil} = \sqrt{|q_0|} \ket{0} + \dots + \sqrt{|q_{n-1}|} \ket{n-1},
\qquad
U_{q,\pm}\ket{j} = \operatorname{sign}(q_j)\ket{j}.
\end{equation}
Then one can easily verify that
\[
\bra{0}(U_q \otimes \Id_a)^\dag \rv U^{(0)} (U_{q,\pm} U_q \otimes \Id_a) \ket{0} = \lambda(\rv{Y}),
\]
where $\rv U^{(0)}$ is the algorithm from Theorem~\ref{thm:main2}. To compute multiple samples at once, for some $m \in \N$ we apply the algorithm $\rv U^{(m)}$ to the superposition
\[
\ket{+}^{\otimes m} = \mathrm{H}^{\otimes m}\ket{0}_m = 2^{-m/2} (\ket{0} + \dots + \ket{2^m-1})
\]
of all sample indices, prepared using a number of Hadamard gates. This gives
\begin{equation} \label{eq:measurement-product}
\bra{0}(U_q \otimes \mathrm{H}^{\otimes m} \otimes \Id_a)^\dag \rv U^{(m)} (U_{q,\pm} U_q \otimes \mathrm{H}^{\otimes m} \otimes \Id_a) \ket{0}_{\lceil \log n \rceil + m + a} = 2^{-m} \sum_{k=0}^{2^m-1}\lambda(\tgrf^{(k)}).
\end{equation}
The right-hand side is exactly the mean of $2^m$ samples of $\lambda(\tgrf)$. By taking the number of samples as
\begin{equation} \label{eq:samples}
m \ge \log_2 \max\{1, \lceil \tol^{-2} \log(2/\delta)\rceil\},
\end{equation}
Hoeffding's inequality again guarantees an error of less than $\tol$ with probability at least $1 - \delta$.
The inner product~\eqref{eq:measurement-product}, however, still needs to be measured. The most efficient method to do so is \emph{amplitude estimation}.
\begin{lemma}[Quantum Amplitude Estimation \cite{BHMT02}] \label{lem:amplitude-estimation}
    Let $N \in \N$, $0 < \delta, \tol < 1$ and assume we have access to a quantum algorithm described by a unitary transformation~$U \in \C^{2^N \times 2^N}$, and let $\lambda \coloneqq \bra{0}U\ket{0}$.
    Then a classical estimate $\rv{\tilde \lambda}$ of $|\lambda|$ can be obtained with failure probability at most $\delta$, specifically
    \[
    \Pr(|\rv{\tilde \lambda} - |\lambda|| > \tol) < \delta,
    \]
    by using $U$ at most $\mathcal{O}(\tol^{-1}\log\delta^{-1})$ times.
\end{lemma}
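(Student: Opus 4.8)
The plan is to reduce the claim to quantum phase estimation applied to a Grover-type amplification operator, following Brassard, H\o yer, Mosca, and Tapp~\cite{BHMT02}; since this is a known result I would only sketch the construction and indicate where the quantitative bounds enter.

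First I would fix the geometry. Write $U\ket{0} = \lambda\ket{0} + \mu\ket{0^\perp}$ with $\ket{0^\perp}\perp\ket{0}$ and $|\lambda|^2 + |\mu|^2 = 1$, so that $U\ket{0}$ lies in the two-dimensional subspace $\mathcal{H} \coloneqq \mathrm{span}\{\ket{0},\ket{0^\perp}\}$. Taking the ``good'' subspace to be $\C\ket{0}$ and setting $S_0 \coloneqq \Id - 2\ket{0}\bra{0}$, define the amplification operator $Q \coloneqq -U S_0 U^{-1} S_0$ (this is the operator $Q(\mathcal{A},\chi)$ of~\cite{BHMT02} with $\mathcal{A}=U$ and $\chi$ the indicator of $\{0\}$). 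A direct computation shows that $\mathcal{H}$ is $Q$-invariant and that $Q|_{\mathcal{H}}$ is a rotation by angle $2\theta$, where $\theta\in[0,\pi/2]$ is determined by $\sin\theta = |\lambda|$; the phase of $\lambda$ drops out because $S_0$ depends only on the projector $\ket{0}\bra{0}$. Hence $Q|_{\mathcal{H}}$ has eigenvalues $e^{\pm 2 i\theta}$ with eigenvectors spanning $\mathcal{H}$, and $U\ket{0}$ has nonzero overlap with both. Each application of $Q$ costs a constant number of applications of $U$ and $U^{-1}$.

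Second, I would run standard quantum phase estimation on $Q$ with $U\ket{0}$ as the eigenstate-register input and $t = \lceil\log_2(\pi/\tol)\rceil + \calO(1)$ ancilla qubits. This returns an estimate $\widetilde\phi$ of one of the eigenphases $\pm 2\theta/(2\pi) \bmod 1$ with $|\widetilde\phi - \phi|\le 2^{-t}$ with a fixed constant probability (at least $4/\pi^2$, and $\ge 2/3$ after a constant number of extra ancillas), using $\calO(2^t) = \calO(\tol^{-1})$ controlled applications of $Q$. Folding $\widetilde\phi$ back to $\widetilde\theta\in[0,\pi/2]$ resolves the $\theta\mapsto -\theta$ and $\theta\mapsto\pi-\theta$ ambiguities and yields $|\widetilde\theta - \theta|\le\tol$. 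I would then output $\widetilde\lambda \coloneqq \sin\widetilde\theta$; since $\sin$ is $1$-Lipschitz, $|\widetilde\lambda - |\lambda|| = |\sin\widetilde\theta - \sin\theta| \le |\widetilde\theta - \theta| \le \tol$. Estimating $\theta$ directly, rather than the hitting probability $|\lambda|^2 = \sin^2\theta$, is what keeps this bound clean and avoids a $\sqrt{\cdot}$ blow-up for small $|\lambda|$. This achieves accuracy $\tol$ with constant failure probability at cost $\calO(\tol^{-1})$ uses of $U$.

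Third, I would boost the success probability from constant to $1-\delta$ by the standard powering argument: repeat the procedure $N = \calO(\log\delta^{-1})$ times independently and report the median of the outputs. If each run lands within $\tol$ of $|\lambda|$ with probability $\ge 2/3$, a Chernoff/Hoeffding estimate shows the median is within $\tol$ of $|\lambda|$ except with probability $e^{-cN}\le\delta$ for a suitable $N = \calO(\log\delta^{-1})$, giving total cost $\calO(\tol^{-1}\log\delta^{-1})$ applications of $U$ and $U^{-1}$, as claimed. I expect the only genuinely delicate point to be the spectral analysis of $Q$: verifying that $\mathcal{H}$ is invariant, that $Q|_{\mathcal{H}}$ is exactly a rotation by $2\theta$ with $\sin\theta = |\lambda|$, and carefully tracking the phase of $\lambda$ and the sign/reflection ambiguities in recovering $\theta$ from an eigenphase of $Q$; everything downstream is routine. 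Alternatively, one may invoke the amplitude-estimation theorem of~\cite{BHMT02} verbatim and append only the median amplification, and I would likely present the argument in that condensed form.
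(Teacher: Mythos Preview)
Your sketch is correct and follows the standard Brassard--H\o yer--Mosca--Tapp construction together with median amplification. The paper itself does not supply a proof of this lemma: it is stated as a citation of~\cite{BHMT02} and used as a black box, so there is no paper-side argument to compare against beyond the reference you already invoke.
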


Combining this lemma with~\eqref{eq:measurement-product} proves Theorem~\ref{thm:main3} for $s = 1$. The remaining step is to show how to compute mixed (and higher-order) moments.

\subsection{Mixed and higher order moments}
The previous section only deals with linear quantities of interest of the transformed random field~$\tgrf$. However, this restriction can be lifted. In fact, using ideas from nonlinear quantum computation developed in~\cite{DP24a}, we can efficiently estimate nonlinear mixed moments of degree~$s \in \N$ of the form
\[
\E[\lambda_1(\tgrf)\lambda_2(\tgrf) \cdots \lambda_s(\tgrf)],
\]
where each $\lambda_\ell$ for $\ell = 1, \dots, s$ is a linear functional as in~\eqref{eq:qoi}, but may have its own weights~$q_j^{(\ell)}$. 
Notably, this general class of observables includes important examples such as the covariance of the random field
\[\Cov(\tgrf(x_1), \tgrf(x_2))=\E[\tgrf(x_1)\tgrf(x_2)]-\E[\tgrf(x_1)]\E[\tgrf(x_2)]\]
at two spatial locations $x_1, x_2 \in D$. Note again, that this does not necessarily correspond to the original covariance~$c$ of the untransformed field~$\rv{Y}$.

Consider the quantum algorithm from~\eqref{eq:measurement-product}, but fix the index~$k \in [2^m]$ of the sample~$\tgrf^{(k)}$:
\[
(U_{q^{(\ell)}} \otimes \Id_{m + a})^\dag \rv U^{(m)}_\ell (U_{q^{(\ell)},\pm} U_{q^{(\ell)}} \ket{0}_{\lceil\log n\rceil} \otimes \ket{k}_m\ket{0}_a) = \left(\lambda_\ell(\tgrf^{(k)})\ket{0} + \ket{*}\right) \otimes \ket{k},
\]
where $\ket{\ast}$ is a short-hand notation to represent states orthogonal to $\ket{0}$, capturing all residual terms irrelevant to the measurement outcome.
Now apply this procedure for each $\ell = 1, \dots, s$ in parallel, using separate quantum registers for the output of each $\lambda_\ell$ while reusing the shared register holding the sample index~$\ket{k}$. The resulting joint circuit, denoted by $U^{(m)}_\mathrm{total}$ (see Figure~\ref{fig:circuit-nonlinear}), prepares the state
\begin{align*}
U^{(m)}_\textrm{total}\ket{0}_{s(\lceil \log n \rceil + a)}\ket{k}_m &= \left(\lambda_1(\tgrf^{(k)})\ket{0}_{\lceil \log n \rceil + a} + \ket{\ast}\right) \otimes \dots \otimes\left(\lambda_s(\tgrf^{(k)})\ket{0} + \ket{\ast}\right) \otimes \ket{k} \\
&= \left(\lambda_1(\tgrf^{(k)})\cdots\lambda_s(\tgrf^{(k)})\ket{0}_{s(\lceil \log n \rceil + a)} + \ket{\ast}\right)\ket{k}.
\end{align*}
Thus, the amplitude of the state $\ket{0}_{s(\lceil \log n \rceil + a)}\ket{k}_m$ is precisely the value of the mixed moment observable for the sample $\tgrf^{(k)}$.
To average over $2^m$ independent samples, we again prepare the uniform superposition $\ket{+}^{\otimes m}$ and compute
\[
\bra{0}\bra{+}^{\otimes m}U^{(m)}_\textrm{total} \ket{0}\ket{+}^{\otimes m} = 2^{-m} \sum_{k=0}^{2^m-1}\lambda_1(\tgrf^{(k)})\cdots\lambda_s(\tgrf^{(k)}).
\]
Since the mixed quantity of interest is bounded by $[-1,1]$, the error estimate~\eqref{eq:samples} still holds. Finally, each $\rv U^{(m)}_\ell$ for $\ell = 1, \dots, s$ is called once, which, together with amplitude estimation from Lemma~\ref{lem:amplitude-estimation}, leads to the runtime bound in Theorem~\ref{thm:main3}.
\begin{figure}
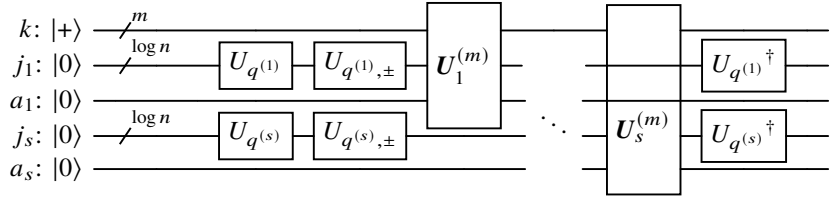

    \centering
    \includestandalone{circuit_nonlinear}
    \caption{The circuit $U^{(m)}_\textrm{total}$ to compute mixed moments. }
    \label{fig:circuit-nonlinear}
\end{figure}

A simpler approach for computing powers  of a single quantity of interest, i.e., higher-order moments, can also be formulated using Quantum Singular Value Transformation (QSVT)~\cite{GSLW19}, specifically using constructions for nonlinear quantum computation introduced in \cite{GMF21,RR23}

\begin{remark}[Application to PDEs with random coefficients] \label{rem:pdes}
A potentially important application of the proposed quantum algorithm is the numerical solution of linear second-order divergence form PDEs with random coefficients, such as the prototypical stationary diffusion equation
\begin{equation} \label{eq:stochastic-pde}
-\operatorname{div}(\tgrf \nabla \rv{u}) = f \qquad \text{in }D
\end{equation}
with homogeneous Dirichlet boundary condition $\rv{u} = 0$ on $\partial D$. To ensure ellipticity of the scalar random coefficient~$\tgrf$, the rescaling function~$\rho$ must map into a positive interval $[\alpha,1]$ for some uniform lower bound~$\alpha > 0$. 

The deterministic version of this problem has been analyzed in~\cite{DP25}, where a block encoding of the diagonal matrix $\Diag(\tgrf(x_j))$ was used as input to a quantum linear system algorithm. Theorem~\ref{thm:main2} provides exactly this block encoding in the stochastic setting.

This, in principle, enables the application of similar techniques to estimate moments of quantities of interest such as $(q, \rv{u})_{L^2}$ for a given $q \in L^2(D)$. A full treatment of this application, including quantum implementation and error analysis, lies beyond the scope of the present paper and will be addressed in future work.
\end{remark}

\section{Numerical experiments}
To illustrate our theoretical results, we present numerical experiments based on both classical and quantum implementations. These include a classical evaluation of the algorithm underlying Theorem~\ref{thm:main1}, as well as simulations of the quantum algorithms from Theorems~\ref{thm:main2} and~\ref{thm:main3} on emulated quantum hardware. All code is implemented using the \texttt{qiskit} framework~\cite{JTK+24} and is available at \href{https://github.com/MDeiml/quantum-grf}{\texttt{github.com/MDeiml/quantum-grf}}. The repository also contains some additional figures.

\subsection{Convergence of the random field}
We illustrate the convergence behavior established in Lemma~\ref{lem:quadrature} by comparing our approximation against a high-fidelity reference solution. The domain of interest is $D = [0, 1]^2$, and we choose a Gaussian covariance function of the form~\eqref{eq:gaussian-covariance} with parameters $\xi = 0.075$ and $C = 1$.

To obtain a reference implementation for comparison, we compute high-resolution samples using the discrete representation~\eqref{eq:discrete-random-field}, where for convenience of implementation we choose a truncation ball with respect to the maximum norm.
Specifically, we take a large oversampling domain~$\tilde D \coloneqq [-5, 6]^2 \supset D$, and discretize it using a fine grid of step size $h = 1/240$, in the sense that we generate independent standard normal samples~$\wn_j$ for each node of the grid $h\Z^2 \cap \tilde D$. The reference values of the random field are then computed at points $x \in h\Z^2 \cap D$. Due to this conforming choice of quadrature and evaluation points, the computation for this reference is equivalent to more typical moving averages or FFT based sampling.
This discrete model serves as a surrogate for the exact solution, allowing a meaningful assessment of the accuracy of our interpolation-based method. A sample of the resulting high-resolution Gaussian random field is shown at the bottom left of Figure~\ref{fig:realizations}. 

To perform the convergence analysis, we require noise variables on a coarser grid, say by a factor~$\ell \in \N$. For this, we define
\[
\wn_j^{(\ell)} = \; \ell^{-1} \; \sum_{\mathclap{k\in\Z^2\cap[-5/h,6/h]^2}}\quad\sinc\left(\frac{k}{\ell} - j\right)\wn_k,
\]
which resembles a trapezoidal quadrature of~\eqref{eq:white-noise-sinc} for a grid size of $\ell h$.
Indeed, one can check that, if the fine noise~$\wn_j$ were to be computed from the white noise integral~\eqref{eq:white-noise-sinc} and the oversampling domain was infinitely big, then this quadrature would be exact.
This lets us compute $\rv Y_{r\!,\ell h}$ as
\[
\rv Y_{r\!,\ell h}(x) = \quad \sum_{\mathclap{j\in\Z^2\cap[-r,r]^2}} \quad f(x, j\ell h) \wn_j^{(\ell)},
\]
for which realizations can be directly compared to $\rv{Y}$.
Again, we choose a truncation ball with respect to the maximum norm, this time with the radius $r = 2\xi^2\pi/(\ell h)^2$. This should yield an asymptotic error of $\calO(e^{-\pi r})$ specifically for the Gaussian covariance; cf.~\cite[Table~7.1]{TW14} and Remark~\ref{rem:correlation-length}.
We evaluate this field at the fine grid points of the discrete reference model. A sample for $\ell = 24$ is depicted at the bottom right of Figure~\ref{fig:realizations}. It is visually very similar to the reference. To illustrate the result of Lemma~\ref{lem:quadrature}, we compute $100$ samples for the reference field and approximations with $\ell$ ranging from $4$ to $24$. The resulting convergence plot in Figure~\ref{fig:convergence} shows a linear relationship between the logarithm of the expected maximum error and the truncation radius~$r$ (up to the point where machine precision is likely to affect the error). This behavior is in agreement with the theoretical prediction of Theorem~\ref{thm:main1}.

\begin{figure}
    \centering
    \begin{subfigure}{0.55\linewidth}
    \centering
    \begin{tikzpicture}
        \begin{axis}[%
		width=0.73\textwidth,
		height=0.5\textwidth,
		at={(0\textwidth,0\textwidth)},
		scale only axis,
		unbounded coords=jump,
		xlabel style={font=\color{white!15!black}},
		xlabel={$r$},
		ylabel style={font=\color{white!15!black}},
		ylabel={$\E\big[\|\text{error}\|_{L^\infty}\big]$},
        ymode = log,
		axis background/.style={fill=white},
		]
		\addplot [mark=o, color=mycolor1, very thick, mark size=2pt] table [col sep=semicolon, x index = 0, y index = 1] {data/moving_averages_error.csv};
        \end{axis}
    \end{tikzpicture}
    \vspace{2ex}
    \subcaption{Convergence of the approximation}
    \label{fig:convergence}
    \end{subfigure}
    \begin{subfigure}{0.4\linewidth}
    \centering
    \setlength{\tabcolsep}{0.5ex}
    \begin{tabular}{cc}
    \includegraphics[width=0.45\linewidth]{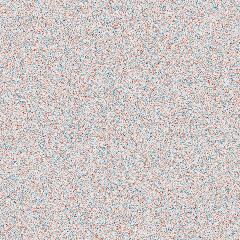} & \includegraphics[width=0.45\linewidth]{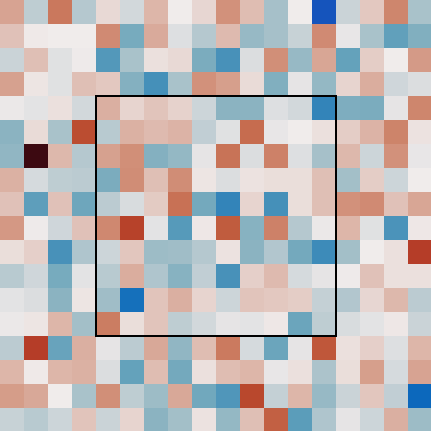} \\
        \includegraphics[width=0.45\linewidth]{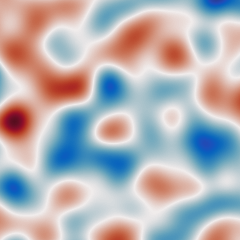} & \includegraphics[width=0.45\linewidth]{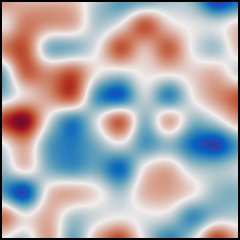}
    \end{tabular}
    \subcaption{Samples of $\wn_j$ (top left) and $\wn_j^{(24)}$ (top right) with resulting random fields below. }
    \label{fig:realizations}
    \end{subfigure}
    \caption{
    Accuracy of the approximation $\rv Y_{r\!,\ell h}$ from Lemma~\ref{lem:quadrature}, for a random field with Gaussian covariance~\eqref{eq:gaussian-covariance} and $\xi = 0.075$. The grid size is chosen as $h = \xi\sqrt{2\pi/r}$. Left: Convergence with respect to the truncation radius~$r$. Right: A realization of the reference random field~$\rv{Y}$ and the white noise it is computed from, along with the corresponding approximation with $\ell = 24$. The coarse noise is computed from the fine noise and includes $r = 4$ additional pixels around the original domain (shown as a black frame). The noise used for the reference is much finer and may therefore appear almost uniform gray.
    }
\end{figure}

\subsection{Quantum uncertainty estimation of random fields}

We implement a PCG random generator similar to the one described in~\cite{ON14}. For simplicity, we restrict ourselves to the state size~$\Nseed = 6$. Each generated number consists of $4$ bits, resulting in a total of $4 \cdot 2^6 = 2^8$ distinct outputs. All output bits of this generator for a sample seed are shown in Figure~\ref{fig:pcg}. Due to the small size of the state space, this generator cannot be expected to pass any standard statistical randomness tests. Nevertheless, we verified that the quantum implementation reproduces the exact output of its classical counterpart, confirming the correctness of the circuit.

\begin{figure}
    \centering
    \begin{subfigure}{0.28\linewidth}
    \centering
    \includegraphics[width=0.9\linewidth]{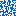}
    \caption{Random bits from PCG.}
    \label{fig:pcg}
    \end{subfigure}
    \hspace{0.04\linewidth}
    \begin{subfigure}{0.28\linewidth}
    \centering
    \includegraphics[width=0.9\linewidth]{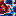}
    \caption{$\tgrf$ from quantum circuit.}
    \label{fig:qc_sample}
    \end{subfigure}
    \hspace{0.04\linewidth}
    \begin{subfigure}{0.28\linewidth}
    \centering
    \includegraphics[width=0.9\linewidth]{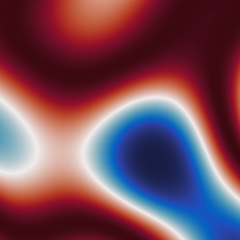}
    \caption{$\tgrf$ reference.}
    \label{fig:sin}
    \end{subfigure}
    \caption{Samples of different random fields. 
Left~(\textsc{a}): uncorrelated random bits from the quantum PCG. 
Right~(\textsc{c}): high-resolution sample of $\tgrf=\cos\rv{Y}$ from classical implementation.
Middle~(\textsc{b}): grid values of another realization of $\tgrf$ from quantum circuit, showing visually similar structures.}
\end{figure}

We then use the random bits generated by this PCG to construct the diagonal-scaling quantum circuit described in Theorem~\ref{thm:main2}. For simplicity, the normally distributed random variables~$\wn_j$ are not generated via the inverse transform~\eqref{eq:pdf}, but instead approximated by the empirical average
\[
\wn_j = n^{-1/2}\sum_{k = 1}^n \big(1 - 2 \rv{b}_k^{(j)}\big),
\]
where each $\rv{b}_k^{(j)}$ is a random bit. This construction converges to a standard normal distribution as $n \to \infty$ by the Central Limit Theorem; in our implementation, we use $n = 4$. Importantly, this approximation does not affect the covariance structure of the resulting Gaussian random field. In fact, as shown in~\cite[Appendix~C]{GBHH94}, this procedure becomes exact for any $n$ in the limit $h \to 0$.

As covariance, we choose the Gaussian covariance~\eqref{eq:gaussian-covariance} with parameters $\xi^2 = 1/8$ and $C = 1$, on the unit square domain $D = [0, 1]^2$. For simplicity, we select the transformation $\rho = \cos$. Thanks to $\cos^{-1}$ being used in the construction of Lemma~\ref{lem:amplitude-encoding}, this choice renders the computation linear in the values~$\wn_j$.
Figure~\ref{fig:sin} shows a sample of the transformed random field~$\tgrf$ generated classically, while Figure~\ref{fig:qc_sample} illustrates a sample evaluated at the grid points $[16]^2/16$ using the quantum circuit.

Using the construction of Theorem~\ref{thm:main3} we now approximate $\Cov(\tgrf_\mathrm{left}, \tgrf_\mathrm{right})$, where
\[
\tgrf_\mathrm{left} = 2 \int\limits_0^{1/2} \! \int\limits_0^1 \tgrf(x, y) \,\mathrm{d}y \dx \approx \frac{1}{128}\sum_{x = 0}^{7} \sum_{y = 0}^{15} \tgrf(x/16, y/16)
\]
is the average of $\tgrf$ over the left half of the domain, and similarly, $\tgrf_\mathrm{right}$ is the average over the right half. A classical numerical approximation gives a reference value of $0.373$ for this covariance. Running the corresponding quantum circuit in a simulator with $2^m = 32$ samples yields a qubit amplitude of approximately $0.29$, corresponding to an error of about $0.08$. This amplitude can be estimated to within an error of about $0.01$ using $10{,}000$ shots of the algorithm via standard Monte Carlo estimation. This is in line with our theoretical predictions.

\section{Conclusion}
We proposed and analyzed an improved variant of the classical moving average scheme for generating Gaussian random fields. In the quantum setting, this leads to an efficient algorithm for producing, for example, random coefficients for partial differential equations. A key advantage of the method is that it allows pointwise nonlinear transformations of the random field. Through numerical experiments, we demonstrated the validity and efficiency of the approach. 

The method has one key limitation: it relies on access to a highly regular (in fact, analytic) convolution kernel. In many practical scenarios, however, only the covariance function is specified, and the efficient computation of the corresponding convolution kernel remains unclear. For the special but significant case of Gaussian covariance, the kernel is explicitly known and sufficiently smooth, making the method directly applicable. Extending the approach to more general covariance structures remains a central challenge for future work.

\section*{Acknowledgements}
We thank Martin Hermann and Michael Feischl for fruitful discussion on the classical generation and sampling of Gaussian random fields.

\bibliographystyle{quantum}
\bibliography{references}

\appendix

\section{Trapezoidal rule in multiple dimensions}
\label{sec:multiple-dimensions-proof}

We roughly outline how to extend the proof for exponential convergence of the trapezoidal rule, as can be found in \cite{TW14}, to multiple dimensions.
For this, let $g \colon \R^d \to \R$ be a function such that $g$ extends to an analytic function $\tilde g \colon (\R + (-\alpha_g, \alpha_g)i)^d \to \C$ and there exist constants $\beta_g, \gamma_g > 0$ with
\[
|g(x)| \le \gamma_g e^{-\beta_g \lvert\Re x|} \qquad\forall x \in (\R + (-\alpha_g, \alpha_g)i)^d.
\]
Let us first check the one-dimensional integral
\begin{equation} \label{eq:helper-exponential}
\int_{-\infty}^\infty \gamma_g e^{-\beta_g|x|} \dx = 2\gamma_g\int_{0}^\infty e^{-\beta_gx} \dx = 2\frac{\gamma_g}{\beta_g}.
\end{equation}
We consider the integral
\(
I = \int_{\R^d} g(x) \dx,
\)
which exists due to the upper bound. We show iteratively that this is approximated by
\[
I_n \coloneqq h^n \sum_{j_1 \in \Z} \dots \sum_{j_n \in \Z} \iint_{-\infty}^\infty g(hj_1, hj_2, \dots, hj_n, x_{n+1}, \dots, x_d) \,\mathrm{d}x_{n+1}\dots x_d.
\]
Clearly, we have $I = I_0$, and $I_d$ corresponds to the $d$-dimensional trapezoidal rule. Due to the one-dimensional trapezoidal rule, we have
\[
|I_n - I_{n+1}| \le h^n \sum_{j_1 \in \Z} \dots \sum_{j_n \in \Z} \iint_{-\infty}^\infty \frac{2 M(hj_1, hj_2, \dots, hj_n, x_{n+1}, \dots, x_d) }{e^{2\pi \alpha_g/h} - 1} \,\mathrm{d}x_{n+2}\dots x_d
\]
where we applied the rule to the function
\(
y \mapsto g(hj_1, \dots, hj_n, y, x_{n+2}, \dots, x_d).
\)
Here,
\[
M(x) = \int_{-\infty}^\infty \gamma_g e^{-\beta_g\sqrt{|x|^2 + y^2}} \,\mathrm{d}y \le \frac{2\gamma_gd}{\beta_g}e^{-\beta_g |x_1|/d}\dots e^{-\beta_g|x_{d-1}|/d}.
\]
For the inequality we used~\eqref{eq:helper-exponential}. We may use~\eqref{eq:helper-exponential} another $d-1$ times to show
\[
|I - I_d| \le \sum_{n \in [d]} |I_n - I_{n+1}| \le d\frac{2(2\gamma_gd)^d}{\beta_g^d(e^{2\pi \alpha_g/h} - 1)}.
\]

\section{Bound for the analytic continuation of the sinc function} \label{sec:sinc-bound}

For the proof of Lemma~\ref{lem:quadrature} we require a bound on the norm of the analytic continuation of the sinc function. The norm of this analytic continuation is given by
\[
|\sinc(x + iy)| = \frac{|\sin((x + iy)\pi)}{\pi|x + iy|} = \frac{|e^{(ix-y)\pi} - e^{-(ix - y)\pi}|}{2\pi|x + iy|} = \frac{|e^{-y\pi}e^{i\pi x} - e^{y\pi}e^{-i\pi x}|}{2\pi|x + iy|}.
\]
With this we calculate
\begin{align*}
|\sinc(x + yi)| &\le \frac{e^{-y\pi}|e^{i\pi x} - e^{-i\pi x}| + |e^{y\pi} - e^{-y\pi}|}{2\pi|x + yi|} \le e^{-\pi y}\frac{\sin(\pi x)}{\pi|x|} + \frac{|e^{-\pi y} - 1 + 1 - e^{\pi y}|}{2 \pi |y|} \\
&\le e^{\pi|y|}|\sinc(x)| + \frac{2|e^{|y|\pi} - 1|}{2\pi|y|} \le 2e^{\pi|y|}.
\end{align*}

\end{document}